\documentclass[11pt,onecolumn,draftclsnofoot]{IEEEtran}


\usepackage{amsmath,amssymb} \interdisplaylinepenalty=2500
\usepackage{cite}
\usepackage{graphicx} 

\newtheorem{definition}{Definition}
\newtheorem{theorem}{Theorem}
\newtheorem{proposition}[theorem]{Proposition}
\newtheorem{corollary}[theorem]{Corollary}
\newtheorem{lemma}[theorem]{Lemma}

\newenvironment{example}{\myexample}{\qed\endmyexample}
\newenvironment{remark}{\textit{Remark: }}{}
\def\qed{\endIEEEproof}


\DeclareMathOperator*{\argmin}{argmin}
\renewcommand{\dim}{\operatorname{\sf dim}\hspace{0.1em}}
\DeclareMathOperator{\rank}{\sf rank\hspace{0.1em}}
\DeclareMathOperator{\wt}{\sf wt\hspace{0.1em}}
\newcommand{\linspan}[1]{\left< #1 \right>}
\newcommand{\slinspan}[1]{\langle #1 \rangle}

\newcommand{\Fq}{\mathbb{F}_q}
\newcommand{\mat}[1]{\begin{bmatrix} #1 \end{bmatrix}}

\newcommand{\dH}{d_{\rm H}}
\newcommand{\dr}{d_{\rm R}}
\newcommand{\ds}{d_{\rm S}}
\newcommand{\di}{d_{\rm I}}

\newcommand{\calC}{\mathcal{C}}

\newcommand{\calE}{\mathcal{E}}

\newcommand{\calP}{\mathcal{P}}

\newcommand{\calU}{\mathcal{U}}
\newcommand{\calV}{\mathcal{V}}

\newcommand{\calX}{\mathcal{X}}
\newcommand{\calY}{\mathcal{Y}}

\newcommand{\magic}{normal}


\title{On Metrics for Error Correction in Network Coding}

\author{Danilo Silva and Frank R. Kschischang
\thanks{This work was supported by CAPES Foundation, Brazil, and by the Natural Sciences and Engineering Research Council of Canada. Portions of this paper were presented at the IEEE Information Theory Workshop, Bergen, Norway, July 2007, and at the 46th Annual Allerton Conference on Communications, Control, and Computing, Monticello, IL, September 2008.}
\thanks{The authors are with The Edward S. Rogers Sr. Department of Electrical and Computer Engineering, University of Toronto, Toronto, ON M5S 3G4, Canada (e-mail: danilo@comm.utoronto.ca, frank@comm.utoronto.ca).}
}

\begin{document}
\maketitle
\thispagestyle{empty}

\begin{abstract}

The problem of error correction in both coherent and noncoherent network coding is considered under an adversarial model. For coherent network coding, where knowledge of the network topology and network code is assumed at the source and destination nodes, the error correction capability of an (outer) code is succinctly described by the rank metric; as a consequence, it is shown that universal network error correcting codes achieving the Singleton bound can be easily constructed and efficiently decoded. For noncoherent network coding, where knowledge of the network topology and network code is not assumed, the error correction capability of a (subspace) code is given exactly by a new metric, called the \emph{injection metric}, which is closely related to, but different than, the subspace metric of K\"otter and Kschischang. In particular, in the case of a non-constant-dimension code, the decoder associated with the injection metric is shown to correct more errors then a minimum-subspace-distance decoder. All of these results are based on a general approach to adversarial error correction, which could be useful for other adversarial channels beyond network coding.

\end{abstract}

\begin{IEEEkeywords}
Adversarial channels, error correction, injection distance, network coding, rank distance, subspace codes.
\end{IEEEkeywords}

\section{Introduction}
\label{sec:introduction}

The problem of error correction for a network implementing linear network coding has been an active research area since 2002 \cite{Cai.Yeung2002,Yeung.Cai2006,Cai.Yeung2006,Yang.Yeung2007,Zhang2008,Jaggi++2008,Kotter.Kschischang2008,Silva++2008,Yang.Yeung2007:RefinedBounds,Yang++2007:ConstructionRefined,Matsumoto2007,Yang++2008:WeightProperties}.
The crucial motivation for the problem is the phenomenon of error propagation, which arises due to the recombination characteristic at the heart of network coding. A single corrupt packet occurring in the application layer (e.g., introduced by a malicious user) may proceed undetected and contaminate other packets, causing potentially drastic consequences and essentially ruling out classical error correction approaches.

In the basic multicast model for linear network coding, a source node transmits $n$ packets, each consisting of $m$ symbols from a finite field $\Fq$. Each link in the network transports a packet free of errors, and each node creates outgoing packets as $\Fq$-linear combinations of incoming packets. There are one or more destination nodes that wish to obtain the original source packets. At a specific destination node, the received packets may be represented as the rows of an $N \times m$ matrix $Y = AX$, where $X$ is the matrix whose rows are the source packets and $A$ is the transfer matrix of the network.
Errors are incorporated in the model by allowing up to $t$ error packets to be added (in the vector space $\Fq^m$) to the packets sent over one or more links. The received matrix $Y$ at a specific destination node may then be written as
\begin{equation}\label{eq:matrix-model}
  Y = AX + DZ
\end{equation}
where $Z$ is a $t \times m$ matrix whose rows are the error packets, and $D$ is the transfer matrix from these packets to the destination.  Under this model, a coding-theoretic problem is how to design an outer code and the underlying network code such that reliable communication (to all destinations) is possible.

This coding problem can be posed in a number of ways depending on the set of assumptions made. For example, we may assume that the network topology and the network code are known at the source and at the destination nodes, in which case we call the system \emph{coherent network coding}. Alternatively, we may assume that such information is unavailable, in which case we call the system \emph{noncoherent network coding}.
The error matrix $Z$ may be random or chosen by an adversary, and there may be further assumptions on the knowledge or other capabilities of the adversary. The essential assumption, in order to pose a meaningful coding problem, is that the number of injected error packets, $t$, is bounded.

Error correction for coherent network coding was originally studied by Cai and Yeung \cite{Cai.Yeung2002,Yeung.Cai2006,Cai.Yeung2006}. Aiming to establish fundamental limits, they focused on the fundamental case $m=1$. In \cite{Yeung.Cai2006,Cai.Yeung2006} (see also \cite{Yang.Yeung2007:RefinedBounds,Yang++2007:ConstructionRefined}), the authors derive a Singleton bound in this context and construct codes that achieve this bound. A drawback of their approach is that the field size required can be very large (on the order of $\binom{|\mathcal{E}|}{2t}$, where $|\mathcal{E}|$ is the number of edges in the network), and no efficient decoding method is given. Similar constructions, analyses and bounds appear also in \cite{Zhang2008,Matsumoto2007,Yang.Yeung2007,Yang++2008:WeightProperties}.

In Section~\ref{sec:coherent-network-coding},
we approach this problem (for general~$m$) under a different framework. We assume the pessimistic situation in which the adversary can not only inject up to $t$ packets but can also freely choose the matrix $D$. In this scenario, it is essential to exploit the structure of the problem when $m>1$. The proposed approach allows us to find a metric---the rank metric---that succinctly describes the error correction capability of a code. We quite easily obtain bounds and constructions analogous to those of \cite{Yeung.Cai2006,Cai.Yeung2006,Yang.Yeung2007:RefinedBounds,Yang++2007:ConstructionRefined,Matsumoto2007}, and show that many of the results in \cite{Yang.Yeung2007,Yang++2008:WeightProperties} can be reinterpreted and simplified in this framework. Moreover, we find that our pessimistic assumption actually incurs no penalty since the codes we propose achieve the Singleton bound of \cite{Yeung.Cai2006}. An advantage of this approach is that it is \emph{universal}, in the sense that the outer code and the network code may be designed independently of each other. More precisely, the outer code may be chosen as any rank-metric code with a good error-correction capability, while the network code can be designed as if the network were error-free (and, in particular, the field size can be chosen as the minimum required for multicast). An additional advantage is that encoding and decoding of properly chosen rank-metric codes can be performed very efficiently~\cite{Silva++2008}.

For noncoherent network coding, a combinatorial framework for error control was introduced by K\"otter and Kschischang in \cite{Kotter.Kschischang2008}. There, the problem is formulated as the transmission of subspaces through an operator channel, where the transmitted and received subspaces are the row spaces of the matrices $X$ and $Y$ in (\ref{eq:matrix-model}), respectively.
They proposed a metric that is suitable for this channel, the so-called subspace distance \cite{Kotter.Kschischang2008}.
They also presented a Singleton-like bound for their metric and subspace codes achieving this bound. The main justification for their metric is the fact that a minimum subspace distance decoder seems to be the necessary and sufficient tool for optimally decoding the disturbances imposed by the operator channel. However, when these disturbances are translated to more concrete terms such as the number of error packets injected, only decoding guarantees can be obtained for the minimum distance decoder of \cite{Kotter.Kschischang2008}, but no converse.
More precisely, assume that $t$ error packets are injected and a general (not necessarily constant-dimension) subspace code with minimum subspace distance $d$ is used. In this case, while it is possible to guarantee successful decoding if $t < d/4$, and we know of specific examples where decoding fails if this condition is not met, a general converse is not known.

In Section~\ref{sec:noncoherent-network-coding}, we prove such a converse for a new metric---called the \emph{injection distance}---under a slightly different transmission model. We assume that the adversary is allowed to arbitrarily select the matrices $A$ and $D$, provided that a lower bound on the rank of $A$ is respected. Under this pessimistic scenario, we show that the injection distance is the fundamental parameter behind the error correction capability of a code; that is, we can guarantee correction of $t$ packet errors \emph{if and only if} $t$ is less than half the minimum injection distance of the code. While this approach may seem too pessimistic, we provide a class of examples where a minimum-injection-distance decoder is able to correct more errors than a minimum-subspace-distance decoder. Moreover, the two approaches coincide when a constant-dimension code is used.

In order to give a unified treatment of both coherent and noncoherent network coding, we first develop a general approach to error correction over (certain) adversarial channels. Our treatment generalizes the more abstract portions of classical coding theory and has the main feature of mathematical simplicity. The essence of our approach is to use a single function---called a \emph{discrepancy function}---to fully describe an adversarial channel. We then propose a distance-like function that is easy to handle analytically and (in many cases, including all the channels considered in this paper) precisely describes the error correction capability of a code. The motivation for this approach is that, once such a distance function is found, one can virtually forget about the channel model and fully concentrate on the combinatorial problem of finding the largest code with a specified minimum distance (just like in classical coding theory). Interestingly, our approach is also useful to characterize the error detection capability of a code.

The remainder of the paper is organized as follows. Section~\ref{sec:preliminaries} establishes our notation and review some basic facts about matrices and rank-metric codes. Section~\ref{sec:adversarial-model} presents our general approach to adversarial error correction, which is subsequently specialized to coherent and noncoherent network coding models. Section~\ref{sec:coherent-network-coding} describes our main results for coherent network coding and discusses their relationship with the work of Yeung et al. \cite{Yeung.Cai2006,Cai.Yeung2006,Yang.Yeung2007}. Section~\ref{sec:noncoherent-network-coding} describes our main results for noncoherent network coding and discusses their relationship with the work of K\"otter and Kschischang \cite{Kotter.Kschischang2008}. Section~\ref{sec:conclusion} presents our conclusions.

\section{Preliminaries}
\label{sec:preliminaries}

\subsection{Basic Notation}

Define $\mathbb{N} = \{0,1,2,\ldots\}$ and $[x]^+ = \max\{x,0\}$. The following notation is used many times throughout the paper. Let $\calX$ be a set, and let $\calC \subseteq \calX$. Whenever a function $d \colon \calX \times \calX \to \mathbb{N}$ is defined, denote
\begin{equation}\nonumber
 d(\calC) \triangleq \min_{x,x' \in \calC:\; x \neq x'}\, d(x,x').
\end{equation}
If $d(x,x')$ is called a ``distance'' between $x$ and $x'$, then $d(\calC)$ is called the \emph{minimum} ``distance'' of $\calC$.

\subsection{Matrices and Subspaces}

Let $\Fq$ denote the finite field with $q$ elements. We use $\Fq^{n \times m}$ to denote the set of all $n \times m$ matrices over $\Fq$ and use $\calP_q(m)$ to denote the set of all subspaces of the vector space $\Fq^m$.

Let $\dim \calV$ denote the dimension of a vector space $\calV$, let $\linspan{X}$ denote the row space of a matrix $X$, and let $\wt(X)$ denote the number of nonzero rows of $X$. Recall that $\dim \linspan{X} = \rank X \leq \wt(X)$.

Let $\calU$ and $\calV$ be subspaces of some fixed vector space. Recall that the sum
  $\calU + \calV = \{u+v \colon u \in \calU,\, v \in \calV\}$
is the smallest vector space that contains both $\calU$ and $\calV$, while the intersection $\calU \cap \calV$ is the largest vector space that is contained in both $\calU$ and $\calV$. Recall also that
\begin{equation}\label{eq:subspace.dimension-equality}
  \dim (\calU + \calV) = \dim \calU + \dim \calV - \dim (\calU \cap \calV).
\end{equation}

The rank of a matrix $X \in \Fq^{n \times m}$ is the smallest $r$ for which there exist matrices $P \in \Fq^{n \times r}$ and $Q \in \Fq^{r \times m}$ such
that $X = PQ$. Note that both matrices obtained in the decomposition are full-rank; accordingly, such a decomposition is called a full-rank decomposition \cite{Rao.Bhimasankaram2000}. In this case, note that, by partitioning $P$ and $Q$, the matrix $X$ can be further expanded as
\begin{equation}\nonumber
  X = PQ = \mat{P' & P''}\mat{Q' \\ Q''} = P'Q' + P''Q''
\end{equation}
where $\rank(P'Q') + \rank(P''Q'') = r$.

Another useful property of the rank function is that, for $X \in \Fq^{n \times m}$ and $A \in \Fq^{N \times n}$, we have \cite{Rao.Bhimasankaram2000}
\begin{equation}\label{eq:bound-rank-product}
  \rank A + \rank X - n \leq \rank AX \leq \min\{\rank A,\,\rank X\}.
\end{equation}

\subsection{Rank-Metric Codes}
\label{sec:rank-metric-codes}

Let $X,Y \in \Fq^{n \times m}$ be matrices. The \emph{rank distance} between $X$ and $Y$ is defined as
\begin{equation}\nonumber
  \dr(X,Y) \triangleq \rank(Y - X).
\end{equation}
It is well known that the rank distance is indeed a metric; in particular, it satisfies the triangle inequality \cite{Gabidulin1985,Rao.Bhimasankaram2000}.

A \emph{rank-metric code} is a matrix code $\calC \subseteq \Fq^{n \times m}$ used in the context of the rank metric. The Singleton bound for the rank metric \cite{Gabidulin1985} (see also \cite{Silva++2008}) states that every rank-metric code $\mathcal{C} \subseteq \Fq^{n \times m}$ with minimum rank distance $\dr(\calC) = d$ must satisfy
\begin{equation}\label{eq:singleton-bound}
|\mathcal{C}| \leq q^{\max\{n,m\} (\min\{n,m\} - d + 1)}.
\end{equation}
Codes that achieve this bound are called \emph{maximum-rank-distance} (MRD) codes and they are known to exist for all choices of parameters $q$, $n$, $m$ and $d \leq \min\{n,m\}$ \cite{Gabidulin1985}.

\section{A General Approach to Adversarial Error Correction}
\label{sec:general-approach}

This section presents a general approach to error correction over adversarial channels. This approach is specialized to coherent and noncoherent network coding in sections \ref{sec:coherent-network-coding} and \ref{sec:noncoherent-network-coding}, respectively.

\subsection{Adversarial Channels}
\label{sec:adversarial-model}


An \emph{adversarial channel} is specified by a finite input alphabet $\calX$, a finite output alphabet $\calY$ and a collection of \emph{fan-out sets} $\calY_x \subseteq \calY$ for all $x \in \calX$. For each input $x$, the output $y$ is constrained to be in $\calY_x$ but is otherwise arbitrarily chosen by an adversary. The constraint on the output is important: otherwise, the adversary could prevent communication simply by mapping all inputs to the same output. No further restrictions are imposed on the adversary; in particular, the adversary is potentially omniscient and has unlimited computational power.

A code for an adversarial channel is a subset\footnote{There is no loss of generality in considering a single channel use, since the channel may be taken to correspond to multiple uses of a simpler channel.} $\calC \subseteq \calX$. We say that a code is \emph{unambiguous} for a channel if the input codeword can always be uniquely determined from the channel output.
More precisely, a code $\calC$ is unambiguous if the sets $\calY_x$, $x \in \calC$, are pairwise disjoint.
The importance of this concept lies in the fact that, if the code is \emph{not} unambiguous, then there exist codewords $x,x'$ that are \emph{indistinguishable} at the decoder: if $\calY_x \cap \calY_{x'} \neq \emptyset$, then the adversary can (and will) exploit this ambiguity by mapping both $x$ and $x'$ to the same output.

A decoder for a code $\calC$ is any function $\hat{x}\colon \calY \to \calC \cup \{ f \}$, where $f \not\in \calC$ denotes a decoding failure (detected error). When $x \in \calC$ is transmitted and $y \in \calY_x$ is received, a decoder is said to be \emph{successful} if $\hat{x}(y) = x$. We say that a decoder is \emph{infallible} if it is successful for all $y \in \calY_x$ and all $x \in \calC$. Note that the existence of an infallible decoder for $\calC$ implies that $\calC$ is unambiguous. Conversely, given any unambiguous code $\calC$, one can always find (by definition) a decoder that is infallible. One example is the exhaustive decoder
\begin{equation}\nonumber
  \hat{x}(y) = \begin{cases}
    x & \text{if $y \in \calY_x$ and $y \not\in \calY_{x'}$ for all $x' \in \calC$, $x'=x$} \\
    f & \text{otherwise}.
  \end{cases}
\end{equation}
In other words, an exhaustive decoder returns $x$ if $x$ is the unique codeword that could possibly have been transmitted when $y$ is received, and returns a failure otherwise.

Ideally, one would like to find a large (or largest) code that is unambiguous for a given adversarial channel, together with a decoder that is infallible (and computationally-efficient to implement).

\subsection{Discrepancy}

It is useful to consider adversarial channels parameterized by an \emph{adversarial effort} $t \in \mathbb{N}$. Assume that the fan-out sets are of the form
\begin{equation}\label{eq:output-sets}
  \calY_x = \{y \in \calY\colon \Delta(x,y) \leq t\}
\end{equation}
for some $\Delta\colon \calX \times \calY \to \mathbb{N}$. The value $\Delta(x,y)$, which we call the \emph{discrepancy} between $x$ and $y$, represents the minimum effort needed for an adversary to transform an input $x$ into an output $y$. The value of $t$ represents the maximum adversarial effort (maximum discrepancy) allowed in the channel.

In principle, there is no loss of generality in assuming (\ref{eq:output-sets}) since, by properly defining $\Delta(x,y)$, one can always express any $\calY_x$ in this form. For instance, one could set $\Delta(x,y) = 0$ if $y \in \calY_x$, and $\Delta(x,y) = \infty$ otherwise. However, such a definition would be of no practical value since $\Delta(x,y)$ would be merely an indicator function. Thus, an effective limitation of our model is that it requires channels that are \emph{naturally} characterized by some discrepancy function. In particular, one should be able to interpret the maximum discrepancy $t$ as the level of ``degradedness'' of the channel.

On the other hand, the assumption $\Delta(x,y) \in \mathbb{N}$ imposes effectively no constraint. Since $|\calX \times \calY|$ is finite, given any ``naturally defined'' $\Delta' \colon \calX \times \calY \to \mathbb{R}$, one can always shift, scale and round the image of $\Delta'$ in order to produce some $\Delta \colon \calX \times \calY \to \mathbb{N}$ that induces the same fan-out sets as $\Delta'$ for all $t$.

\begin{example}\label{ex:t-error-channel}
  Let us use the above notation to define a $t$-error channel, i.e., a vector channel that introduces at most $t$ symbol errors (arbitrarily chosen by an adversary). Assume that the channel input and output alphabets are given by $\calX = \calY = \Fq^n$. It is easy to see that the channel can be characterized by a discrepancy function that counts the number of components in which an input vector $x$ and an output vector $y$ differ. More precisely, we have $\Delta(x,y) = \dH(x,y)$, where $\dH(\cdot,\cdot)$ denotes the \emph{Hamming distance} function.
\end{example}

A main feature of our proposed discrepancy characterization is to allow us to study a whole family of channels (with various levels of degradedness) under the same framework. For instance, we can use a single decoder for all channels in the same family. Define the \emph{minimum-discrepancy decoder} given by
\begin{equation}\label{eq:minimum-discrepancy-decoder}
  \hat{x} = \argmin_{x \in \calC}\, \Delta(x,y)
\end{equation}
where any ties in (\ref{eq:minimum-discrepancy-decoder}) are assumed to be broken arbitrarily. It is easy to see that a minimum-discrepancy decoder is infallible provided that the code is unambiguous. Thus, we can safely restrict attention to a minimum-discrepancy decoder, regardless of the maximum discrepancy $t$ in the channel.

\subsection{Correction Capability}

Given a fixed family of channels---specified by $\calX$, $\calY$ and $\Delta(\cdot,\cdot)$, and parameterized by a maximum discrepancy $t$---we wish to identify the largest (worst) channel parameter for which we can guarantee successful decoding. We say that a code is \emph{$t$-discrepancy-correcting} if it is unambiguous for a channel with maximum discrepancy $t$. The \emph{discrepancy-correction capability} of a code $\calC$ is the largest $t$ for which $\calC$ is $t$-discrepancy-correcting.

We start by giving a general characterization of the discrepancy-correction capability.
Let the function $\tau \colon \calX \times \calX \to \mathbb{N}$ be given by
\begin{equation}\label{eq:function-exact-capability}
  \tau(x,x') = \min_{y \in \calY}\, \max\{\Delta(x,y),\, \Delta(x',y)\} -1.
\end{equation}
We have the following result.

\begin{proposition}
  The discrepancy-correction capability of a code $\calC$ is given exactly by $\tau(\calC)$. In other words, $\calC$ is $t$-discrepancy-correcting if and only if $t \leq \tau(\calC)$.
\end{proposition}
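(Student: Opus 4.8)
The plan is to unwind the definitions and reduce the statement to a pointwise claim about pairs of codewords. Recall that, by (\ref{eq:output-sets}), the fan-out set of an input $x$ in a channel with maximum discrepancy $t$ is $\calY_x = \{y \in \calY : \Delta(x,y) \le t\}$, and that ``$t$-discrepancy-correcting'' means precisely that the sets $\calY_x$, $x \in \calC$, are pairwise disjoint. So the whole proposition will follow once I establish, for any two distinct inputs $x,x' \in \calX$, the equivalence
\begin{equation}\nonumber
  \calY_x \cap \calY_{x'} = \emptyset \iff t \le \tau(x,x').
\end{equation}

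The key step is to observe that $\calY_x \cap \calY_{x'} \neq \emptyset$ means there exists $y \in \calY$ with $\Delta(x,y) \le t$ and $\Delta(x',y) \le t$, i.e., with $\max\{\Delta(x,y),\Delta(x',y)\} \le t$. Since $\calY$ is finite, the minimum in (\ref{eq:function-exact-capability}) is attained, so such a $y$ exists if and only if $\min_{y \in \calY}\max\{\Delta(x,y),\Delta(x',y)\} \le t$, that is, if and only if $\tau(x,x') + 1 \le t$, i.e. $t > \tau(x,x')$. Negating gives the displayed equivalence.

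To finish, I combine this over all pairs. The code $\calC$ is $t$-discrepancy-correcting iff $\calY_x \cap \calY_{x'} = \emptyset$ for every pair of distinct $x,x' \in \calC$, which by the equivalence above holds iff $t \le \tau(x,x')$ for every such pair, which by the definition $\tau(\calC) = \min_{x \neq x' \in \calC}\tau(x,x')$ holds iff $t \le \tau(\calC)$. Hence the discrepancy-correction capability, being the largest such $t$, equals $\tau(\calC)$.

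I do not anticipate any real obstacle here: the argument is a direct manipulation of the definitions, and the only point that needs a word of care is invoking finiteness of $\calY$ to guarantee the minimizing $y$ exists (so that ``there is a $y$ with $\max \le t$'' and ``$\min \max \le t$'' are genuinely equivalent). If one wanted to be slightly more careful, one could also remark at the outset that it suffices to characterize unambiguity, since the passage between ``unambiguous'' and ``$t$-discrepancy-correcting'' is exactly the definition given in Section~\ref{sec:adversarial-model}.
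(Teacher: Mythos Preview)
Your proof is correct and takes essentially the same approach as the paper: both arguments reduce the question to whether, for some pair $x,x'$ of distinct codewords, there exists a $y$ with $\max\{\Delta(x,y),\Delta(x',y)\}\le t$, and then compare this with the definition of $\tau(x,x')$. Your presentation is slightly more structured (isolating the pointwise equivalence $\calY_x\cap\calY_{x'}=\emptyset \iff t\le\tau(x,x')$ before minimizing over pairs), but the logical content is identical to the paper's contrapositive argument.
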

\begin{proof}
 Suppose that the code is not $t$-discrepancy-correcting, i.e., that there exist some distinct $x,x' \in \calC$ and some $y \in \calY$ such that $\Delta(x,y) \leq t$ and $\Delta(x',y) \leq t$. Then $\tau(\calC) \leq \tau(x,x') \leq \max\{\Delta(x,y),\, \Delta(x',y)\} - 1 \leq t -1 < t$. In other words, $\tau(\calC) \geq t$ implies that the code is $t$-discrepancy-correcting.

 Conversely, suppose that $\tau(\calC) < t$, i.e., $\tau(\calC) \leq t-1$. Then there exist some distinct $x,x' \in \calC$ such that $\tau(x,x') \leq t-1$. This in turn implies that there exists some $y \in \calY$ such that $\max\{\Delta(x,y),\, \Delta(x',y)\} \leq t$. Since this implies that both $\Delta(x,y) \leq t$ and $\Delta(x',y) \leq t$, it follows that the code is not $t$-discrepancy-correcting.
\end{proof}

At this point, it is tempting to define a ``distance-like'' function given by $2(\tau(x,x') + 1)$, since this would enable us to immediately obtain results analogous to those of classical coding theory (such as the error correction capability being half the minimum distance of the code).
This approach has indeed been taken in previous works, such as \cite{Yang++2008:WeightProperties}.
Note, however, that the terminology ``distance'' suggests a geometrical interpretation, which is not immediately clear from (\ref{eq:function-exact-capability}). Moreover, the function (\ref{eq:function-exact-capability}) is not necessarily mathematically tractable.
It is the objective of this section to propose a ``distance'' function $\delta \colon \calX \times \calX \to \mathbb{N}$ that is motivated by geometrical considerations and is easier to handle analytically, yet is useful to characterize the correction capability of a code. In particular, we shall be able to obtain the same results as \cite{Yang++2008:WeightProperties} with much greater mathematical simplicity---which will later turn out to be instrumental for code design.

For $x,x' \in \calX$, define the \emph{$\Delta$-distance} between $x$ and $x'$ as
\begin{equation}\label{eq:Delta-distance-definition}
  \delta(x,x') \triangleq \min_{y \in \calY} \left\{\Delta(x,y) + \Delta(x',y)\right\}.
\end{equation}
The following interpretation holds. Consider the complete bipartite graph with vertex sets $\calX$ and $\calY$, and assume that each edge $(x,y) \in \calX \times \calY$ is labeled by a ``length'' $\Delta(x,y)$. Then $\delta(x,x')$ is the length of the shortest path between vertices $x,x' \in \calX$. Roughly speaking, $\delta(x,x')$ gives the minimum total effort that an adversary would have to spend (in independent channel realizations) in order to make $x$ and $x'$ both plausible explanations for some received output.

\begin{example}
  Let us compute the $\Delta$-distance for the channel of Example~\ref{ex:t-error-channel}. We have $\delta(x,x') = \min_{y} \left\{\dH(x,y) + \dH(x',y)\right\} \geq \dH(x,x')$, since the Hamming distance satisfies the triangle inequality. This bound is achievable by taking, for instance, $y=x'$. Thus, $\delta(x,x') = \dH(x,x')$, i.e., the $\Delta$-distance for this channel is given precisely by the Hamming distance.
\end{example}

The following result justifies our definition of the $\Delta$-distance.

\begin{proposition}\label{prop:correction-guarantee}
  For any code $\calC$, $\tau(\calC) \geq \lfloor (\delta(\calC) -1)/2 \rfloor$.
\end{proposition}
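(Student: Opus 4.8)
The plan is to bound $\tau(x,x')$ from below pointwise, for each pair of distinct codewords, and then pass to the minimum over pairs. First I would fix distinct $x,x' \in \calC$ and an arbitrary output $y \in \calY$. Writing $a = \Delta(x,y)$ and $b = \Delta(x',y)$, both natural numbers, the elementary inequality $\max\{a,b\} \geq (a+b)/2$ together with the fact that the left-hand side is an integer gives $\max\{a,b\} \geq \lceil (a+b)/2 \rceil$. Since by definition $a + b = \Delta(x,y) + \Delta(x',y) \geq \delta(x,x')$ and the ceiling function is nondecreasing, this yields
\begin{equation}\nonumber
  \max\{\Delta(x,y),\, \Delta(x',y)\} \geq \left\lceil \frac{\delta(x,x')}{2} \right\rceil.
\end{equation}

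This holds for every $y \in \calY$, so it survives the minimization in the definition of $\tau$:
\begin{equation}\nonumber
  \tau(x,x') = \min_{y \in \calY} \max\{\Delta(x,y),\, \Delta(x',y)\} - 1 \geq \left\lceil \frac{\delta(x,x')}{2} \right\rceil - 1.
\end{equation}
It then remains only to invoke the integer identity $\lceil m/2 \rceil - 1 = \lfloor (m-1)/2 \rfloor$, valid for every $m \in \mathbb{N}$ (check the two parities of $m$ separately), with $m = \delta(x,x')$. Hence $\tau(x,x') \geq \lfloor (\delta(x,x') - 1)/2 \rfloor$ for all distinct $x, x' \in \calC$.

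Finally, since the map $m \mapsto \lfloor (m-1)/2 \rfloor$ is nondecreasing in $m$, taking the minimum over all distinct pairs $x,x' \in \calC$ gives
\begin{equation}\nonumber
  \tau(\calC) = \min_{x \neq x'} \tau(x,x') \geq \min_{x \neq x'} \left\lfloor \frac{\delta(x,x')-1}{2} \right\rfloor = \left\lfloor \frac{\delta(\calC)-1}{2} \right\rfloor,
\end{equation}
which is the claim. I do not expect any real obstacle: the argument is a short chain of elementary inequalities, and the only point requiring care is tracking integrality—passing through $\lceil \cdot \rceil$ rather than simply $(a+b)/2$—so that the bound lands exactly on $\lfloor(\delta(\calC)-1)/2\rfloor$ and not on something weaker. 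An equivalent route is to establish $2(\tau(x,x')+1) \geq \delta(x,x')$ first and then take floors, which amounts to the same computation.
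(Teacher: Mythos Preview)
Your argument is correct and is essentially the same as the paper's: the paper's one-line proof invokes the inequality $\lfloor (a+b+1)/2 \rfloor \leq \max\{a,b\}$ for integers $a,b$, which is exactly your $\max\{a,b\} \geq \lceil (a+b)/2 \rceil$, and the remaining steps (minimize over $y$, use $a+b \geq \delta(x,x')$, take the minimum over pairs) are left implicit there but spelled out by you. Your version is simply a more detailed unpacking of the same chain of inequalities.
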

\begin{proof}
This follows from the fact that $\lfloor (a+b+1)/2 \rfloor \leq \max\{a,b\}$ for all $a,b \in \mathbb{Z}$.
\end{proof}

Proposition~\ref{prop:correction-guarantee} shows that $\delta(\calC)$ gives a lower bound on the correction capability of a code---therefore providing a correction guarantee. The converse result, however, is not necessarily true in general. Thus, up to this point, the proposed function is only partially useful: it is conceivable that the $\Delta$-distance might be too conservative and give a guaranteed correction capability that is lower than the actual one. Nevertheless, it is easier to deal with addition, as in (\ref{eq:Delta-distance-definition}), rather than maximization, as in (\ref{eq:function-exact-capability}).

A special case where the converse is true is for a family of channels whose discrepancy function satisfies the following condition:

\begin{definition}\label{def:magic-property}
  A discrepancy function $\Delta \colon \calX \times \calX \to \mathbb{N}$ is said to be \emph{\magic} if, for all $x,x' \in \calX$ and all $0\leq i \leq \delta(x,x')$, there exists some $y \in \calY$ such that $\Delta(x,y) = i$ and $\Delta(x',y) = \delta(x,x') - i$.
\end{definition}

\begin{theorem}\label{thm:magic-property-converse}
  Suppose that $\Delta(\cdot,\cdot)$ is \magic. For every code $\calC \subseteq \calX$, we have $\tau(\calC) = \lfloor (\delta(\calC)-1)/2 \rfloor$.
\end{theorem}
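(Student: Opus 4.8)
The plan is to prove the two inequalities $\tau(\calC) \geq \lfloor (\delta(\calC)-1)/2 \rfloor$ and $\tau(\calC) \leq \lfloor (\delta(\calC)-1)/2 \rfloor$ separately. The first is precisely Proposition~\ref{prop:correction-guarantee} and needs nothing new, so the entire content of the theorem is the reverse inequality, which is where the \magic\ hypothesis enters.

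For the reverse inequality I would first reduce to a single pair of codewords. Since $\tau(\calC) = \min_{x \neq x'} \tau(x,x')$, it suffices to produce distinct $x,x' \in \calC$ with $\tau(x,x') \leq \lfloor (\delta(\calC)-1)/2 \rfloor$; the natural candidate is a pair attaining $\delta(x,x') = \delta(\calC)$. Abbreviating $\delta = \delta(x,x')$ and recalling that $\tau(x,x') = \min_{y \in \calY} \max\{\Delta(x,y),\Delta(x',y)\} - 1$, the task becomes to exhibit a single output $y$ for which $\max\{\Delta(x,y),\Delta(x',y)\}$ is as small as one could hope, namely $\lceil \delta/2 \rceil$.

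This is exactly what Definition~\ref{def:magic-property} provides: applying it with the index $i = \lfloor \delta/2 \rfloor$, which lies in the admissible range $0 \leq i \leq \delta$, yields some $y \in \calY$ with $\Delta(x,y) = \lfloor \delta/2 \rfloor$ and $\Delta(x',y) = \delta - \lfloor \delta/2 \rfloor = \lceil \delta/2 \rceil$. Hence $\max\{\Delta(x,y),\Delta(x',y)\} = \lceil \delta/2 \rceil$, so $\tau(x,x') \leq \lceil \delta/2 \rceil - 1$. A one-line check on the parity of $\delta$ gives $\lceil \delta/2 \rceil - 1 = \lfloor (\delta-1)/2 \rfloor$, and combining this with $\tau(\calC) \leq \tau(x,x')$ finishes the proof.

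Since the argument is so short, there is no real internal obstacle; the only things requiring care are the elementary floor/ceiling identities and the observation that the chosen $i$ lies in range, so that Definition~\ref{def:magic-property} is applicable. The substantive work is elsewhere: in Sections~\ref{sec:coherent-network-coding}--\ref{sec:noncoherent-network-coding} one must verify that the discrepancy functions attached to coherent and noncoherent network coding are \magic, and Definition~\ref{def:magic-property} has been set up precisely so that the present theorem is then immediate.
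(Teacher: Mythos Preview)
Your proposal is correct and follows essentially the same approach as the paper: both invoke Proposition~\ref{prop:correction-guarantee} for one direction and, for the converse, apply Definition~\ref{def:magic-property} with $i = \lfloor \delta(x,x')/2 \rfloor$ to produce a $y$ whose discrepancies split as evenly as possible, giving $\max\{\Delta(x,y),\Delta(x',y)\} = \lceil \delta(x,x')/2 \rceil$ and hence $\tau(x,x') \leq \lfloor (\delta(x,x')-1)/2 \rfloor$. The only cosmetic difference is that the paper establishes this per-pair inequality for arbitrary $x,x'$ and then minimizes, whereas you first select a pair attaining $\delta(\calC)$; both routes are equally valid.
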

\begin{proof}
 We just need to show that $\lfloor (\delta(\calC)-1)/2 \rfloor \geq \tau(\calC)$.
 Take any $x,x' \in \calX$. Since $\Delta(\cdot,\cdot)$ is \magic, there exists some $y \in \calY$ such that $\Delta(x,y) + \Delta(x',y) = \delta(x,x')$ and either $\Delta(x,y) = \Delta(x',y)$ or $\Delta(x,y) = \Delta(x',y) - 1$. Thus, $\delta(x,x') \geq 2 \max\{\Delta(x,y),\Delta(x',y)\}-1$ and therefore $\lfloor (\delta(x,x')-1)/2 \rfloor \geq \tau(x,x')$.
\end{proof}

Theorem~\ref{thm:magic-property-converse} shows that, for certain families of channels, our proposed $\Delta$-distance achieves the goal of this section: it is a (seemingly) tractable function that precisely describes the correction capability of a code. In particular, the basic result of classical coding theory---that the Hamming distance precisely describes the error correction capability of a code---follows from the fact that the Hamming distance (as a discrepancy function) is \magic. As we shall see, much of our effort in the next sections reduces to showing that a specified discrepancy function is \magic.

Note that, for normal discrepancy functions, we actually have $\tau(x,x') = \lfloor (\delta(x,x')-1)/2 \rfloor$, so Theorem~\ref{thm:magic-property-converse} may also be regarded as providing an alternative (and more tractable) expression for $\tau(x,x')$.

\begin{example}
  To give a nontrivial example, let us consider a binary vector channel that introduces at most $\rho$ erasures (arbitrarily chosen by an adversary). The input alphabet is given by $\calX = \{0,1\}^n$, while the output alphabet is given by $\calX = \{0,1,\epsilon\}^n$, where $\epsilon$ denotes an erasure. We may define $\Delta(x,y) = \sum_{i=1}^n \Delta(x_i,y_i)$, where
\begin{equation}\nonumber
  \Delta(x_i,y_i) = \begin{cases}
                      0 & \text{if $y_i = x_i$} \\
                      1 & \text{if $y_i = \epsilon$} \\
                      \infty & \text{otherwise}
                     \end{cases}.
\end{equation}
The fan-out sets are then given by $\calY_x = \{y \in \calY \colon \Delta(x,y) \leq \rho\}$. In order to compute $\delta(x,x')$, observe the minimization in (\ref{eq:Delta-distance-definition}). It is easy to see that we should choose $y_i = x_i$ when $x_i = x_i'$, and $y_i = \epsilon$ when $x_i \neq x_i'$. It follows that $\delta(x,x') = 2\dH(x,x')$. Note that $\Delta(x,y)$ is \magic. It follows from Theorem~\ref{thm:magic-property-converse} that a code $\calC$ can correct all the $\rho$ erasures introduced by the channel if and only if $2\dH(\calC) > 2\rho$. This result precisely matches the well-known result of classical coding theory.
\end{example}

It is worth clarifying that, while we call $\delta(\cdot,\cdot)$ a ``distance,'' this function may not necessarily be a metric. While symmetry and non-negativity follow from the definition, a $\Delta$-distance may not always satisfy ``$\delta(x,y) = 0 \iff x=y$'' or the triangle inequality. Nevertheless, we keep the terminology for convenience.

Although this is not our main interest in this paper, it is worth pointing out that the framework of this section is also useful for obtaining results on error \emph{detection}. Namely, the $\Delta$-distance gives, in general, a lower bound on the discrepancy detection capability of a code under a bounded discrepancy-correcting decoder; when the discrepancy function is \magic, then the $\Delta$-distance precisely characterizes this detection capability (similarly as in classical coding theory). For more details on this topic, see Appendix~\ref{sec:detection-capability}.

\section{Coherent Network Coding}
\label{sec:coherent-network-coding}

\subsection{A Worst-Case Model and the Rank Metric}
\label{ssec:coherent-main}

The basic channel model for coherent network coding with adversarial errors is a matrix channel with input $X \in \Fq^{n \times m}$, output $Y \in \Fq^{N \times m}$, and channel law given by (\ref{eq:matrix-model}), where $A \in \Fq^{N \times n}$ is fixed and known to the receiver, and $Z \in \Fq^{t \times m}$ is arbitrarily chosen by an adversary. Here, we make the following additional assumptions:
\begin{itemize}
  \item The adversary has unlimited computational power and is omniscient; in particular, the adversary knows both $A$ and $X$;
  \item The matrix $D \in \Fq^{N \times t}$ is arbitrarily chosen by the adversary.
\end{itemize}
We also assume that $t < n$ (more precisely, we should assume $t < \rank A$); otherwise, the adversary may always choose $DZ = -AX$, leading to a trivial communications scenario.

The first assumption above allows us to use the approach of Section~\ref{sec:general-approach}. The second assumption may seem somewhat ``pessimistic,'' but it has the analytical advantage of eliminating from the problem any further dependence on the network code. (Recall that, in principle, $D$ would be determined by the network code and the choice of links in error.)

The power of the approach of Section~\ref{sec:general-approach} lies in the fact that the channel model defined above can be \emph{completely} described by the following discrepancy function
\begin{equation}\label{eq:coherent-discrepancy}
   \Delta_{A}(X,Y) \triangleq \min_{\substack{r \in \mathbb{N},\,D \in \Fq^{N \times r},\,Z \in \Fq^{r \times m}\colon \\ Y = AX + DZ}}\, r.
\end{equation}
The discrepancy $\Delta_{A}(X,Y)$ represents the minimum number of error packets that the adversary needs to inject in order to transform an input $X$ into an output $Y$, given that the transfer matrix is $A$. The subscript in $\Delta_A(X,Y)$ is to emphasize the dependence on~$A$.
For this discrepancy function, the minimum-discrepancy decoder becomes
\begin{equation}\label{eq:coherent-decoding-rule}
  \hat{X} = \argmin_{X \in \mathcal{C}}\, \Delta_{A}(X,Y).
\end{equation}
Similarly, the $\Delta$-distance induced by $\Delta_A(X,Y)$ is given by
\begin{equation}\label{eq:coherent-delta-between-codewords}
  \delta_{A}(X,X') \triangleq \min_{Y \in \Fq^{N \times m}} \left\{ \Delta_{A}(X,Y) + \Delta_{A}(X',Y)\right\}
\end{equation}
for $X,X' \in \Fq^{n \times m}$.

We now wish to find a simpler expression for $\Delta_A(X,Y)$ and $\delta_A(X,X')$, and show that $\Delta_A(X,Y)$ is \magic.

\begin{lemma}\label{lem:coherent-discrepancy-rank}
\begin{equation}\label{eq:coherent-discrepancy-rank}
  \Delta_{A}(X,Y) = \rank(Y - AX).
\end{equation}
\end{lemma}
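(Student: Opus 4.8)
The plan is to show the two inequalities $\Delta_A(X,Y) \le \rank(Y-AX)$ and $\Delta_A(X,Y) \ge \rank(Y-AX)$ separately, exploiting the full-rank decomposition reviewed in Section~\ref{sec:preliminaries}. For the upper bound, set $W = Y - AX$ and let $r = \rank W$. Take a full-rank decomposition $W = DZ$ with $D \in \Fq^{N \times r}$ and $Z \in \Fq^{r \times m}$; this is a feasible choice in the minimization~(\ref{eq:coherent-discrepancy}) because $Y = AX + DZ$ holds by construction. Hence $\Delta_A(X,Y) \le r = \rank(Y-AX)$.

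For the lower bound, suppose $(r, D, Z)$ is any feasible triple, so $Y = AX + DZ$ with $D \in \Fq^{N \times r}$ and $Z \in \Fq^{r \times m}$. Then $Y - AX = DZ$, and by the submultiplicativity of rank (the right-hand inequality in~(\ref{eq:bound-rank-product}), applied to the product $DZ$) we get $\rank(Y-AX) = \rank(DZ) \le \min\{\rank D, \rank Z\} \le r$. Taking the minimum over all feasible $r$ gives $\rank(Y-AX) \le \Delta_A(X,Y)$. Combining the two inequalities yields~(\ref{eq:coherent-discrepancy-rank}).

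I do not expect any serious obstacle here: the statement is essentially the observation that the minimum inner dimension over all factorizations $DZ$ of a fixed matrix is exactly its rank, which is the defining property of rank recalled in the Preliminaries. The only mild point of care is bookkeeping on matrix dimensions (that $D$ must have $N$ rows, matching the output space $\Fq^{N\times m}$), which the full-rank decomposition respects automatically, and the degenerate case $Y = AX$, where $r=0$ is feasible and both sides equal $0$.
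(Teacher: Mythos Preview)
Your proposal is correct and follows essentially the same approach as the paper: both directions use that the minimum inner dimension of a factorization $DZ$ of $Y-AX$ equals $\rank(Y-AX)$, with the lower bound coming from $r \geq \rank Z \geq \rank DZ$ and the upper bound from a full-rank decomposition. The paper's proof is simply a more condensed version of what you wrote.
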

\begin{proof}
Consider $\Delta_{A}(X,Y)$ as given by (\ref{eq:coherent-discrepancy}). For any feasible triple $(r,D,Z)$, we have $r \geq \rank Z \geq \rank DZ = \rank (Y-AX)$. This bound is achievable by setting $r = \rank (Y-AX)$ and letting $DZ$ be a full-rank decomposition of $Y-AX$.
\end{proof}

\begin{lemma}
  \begin{equation}\nonumber
    \delta_{A}(X,X') = \dr(AX,AX') = \rank A(X'-X).
  \end{equation}
\end{lemma}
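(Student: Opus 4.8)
The plan is to compute $\delta_A(X,X')$ directly from its definition~\eqref{eq:coherent-delta-between-codewords}, using the simplified form of the discrepancy given by Lemma~\ref{lem:coherent-discrepancy-rank}. Substituting that expression, we have
\begin{equation}\nonumber
  \delta_A(X,X') = \min_{Y \in \Fq^{N \times m}} \left\{ \rank(Y - AX) + \rank(Y - AX') \right\}.
\end{equation}
The first step is the lower bound. Writing $U = Y - AX$ and $V = Y - AX'$, we have $V - U = AX - AX' = -A(X'-X)$, so by subadditivity of the rank (the triangle inequality for the rank metric, recalled in Section~\ref{sec:rank-metric-codes}), $\rank U + \rank V \geq \rank(U - V) = \rank A(X'-X)$. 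Hence $\delta_A(X,X') \geq \rank A(X'-X) = \dr(AX, AX')$.

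The second step is to exhibit a $Y$ achieving this bound. The natural choice is $Y = AX'$, which gives $\rank(Y - AX) + \rank(Y - AX') = \rank(AX' - AX) + 0 = \rank A(X'-X)$. (Equivalently $Y = AX$ works.) Combining the two directions yields $\delta_A(X,X') = \rank A(X'-X) = \dr(AX,AX')$, which is the claim.

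I do not anticipate any real obstacle here: both inequalities are one-liners, since the work of identifying $\Delta_A$ with a rank was already done in Lemma~\ref{lem:coherent-discrepancy-rank}, and the rank metric's triangle inequality is cited in the preliminaries. If anything, the only point needing a moment's care is that the minimizing $Y$ need not be unique and need not lie in $\linspan{A}$ in general, but since we only need \emph{one} feasible $Y$ attaining the bound, the explicit choice $Y = AX'$ settles it cleanly. This lemma will presumably then be used, together with a proof that $\Delta_A$ is \magic\ (verifying Definition~\ref{def:magic-property} by interpolating ranks between $0$ and $\rank A(X'-X)$ via full-rank decompositions, as in the proof of Lemma~\ref{lem:coherent-discrepancy-rank}), to invoke Theorem~\ref{thm:magic-property-converse} and conclude that the rank distance $\dr(AX,AX')$ exactly governs the correction capability.
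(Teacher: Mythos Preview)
Your proposal is correct and follows essentially the same argument as the paper: substitute Lemma~\ref{lem:coherent-discrepancy-rank}, apply the triangle inequality for the rank metric to get the lower bound, and attain it with $Y = AX$ (you chose $Y = AX'$, which is the symmetric and equally valid choice). The additional commentary about how the lemma will be used downstream is accurate but not part of the proof itself.
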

\begin{proof}
From (\ref{eq:coherent-delta-between-codewords}) and Lemma~\ref{lem:coherent-discrepancy-rank}, we have $\delta_{A}(X,X') = \min_Y \left\{ \dr(Y,AX) + \dr(Y,AX') \right\}$. Since the rank metric satisfies the triangle inequality, we have $\dr(AX,Y) + \dr(AX',Y) \geq \dr(AX,AX')$. This lower bound can be achieved by choosing, e.g., $Y = AX$.
\end{proof}

Note that $\delta_A(\cdot,\cdot)$ is a metric if and only if $A$ has full column rank---in which case it is precisely the rank metric. (If $\rank A < n$, then there exist $X \neq X'$ such that $\delta_A(X,X') = 0$.)

\begin{theorem}\label{thm:coherent-magic}
  The discrepancy function $\Delta_A(\cdot,\cdot)$ is \magic.
\end{theorem}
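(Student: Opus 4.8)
The plan is to exploit the full-rank decomposition identity already recorded in Section~\ref{sec:preliminaries}, namely that a rank-$r$ matrix admits an expansion $W = P'Q' + P''Q''$ with $\rank(P'Q') + \rank(P''Q'') = r$ in which the two summands can be made to have any prescribed ranks summing to $r$. Concretely, fix $X,X' \in \Fq^{n\times m}$, write $d = \delta_A(X,X')$, and recall from the previous lemma that $d = \rank A(X'-X)$. By Lemma~\ref{lem:coherent-discrepancy-rank}, the defining property of a \magic discrepancy function specializes to the following: for every $i$ with $0 \le i \le d$ there is a matrix $Y \in \Fq^{N\times m}$ with $\rank(Y - AX) = i$ and $\rank(Y - AX') = d - i$.

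To produce such a $Y$, first I would take a full-rank decomposition $A(X'-X) = PQ$, with $P \in \Fq^{N\times d}$ of full column rank and $Q \in \Fq^{d\times m}$ of full row rank, and partition $P = \mat{P' & P''}$ and $Q = \mat{Q' \\ Q''}$ so that $P' \in \Fq^{N\times i}$ and $Q' \in \Fq^{i\times m}$. Then $A(X'-X) = P'Q' + P''Q''$, and since $P'$ inherits full column rank from $P$ and $Q'$ inherits full row rank from $Q$ (and likewise for $P'',Q''$), we get $\rank(P'Q') = i$ and $\rank(P''Q'') = d-i$, exactly as in the expansion displayed in Section~\ref{sec:preliminaries}. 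Now set $Y \triangleq AX + P'Q'$. Then $Y - AX = P'Q'$ has rank $i$, while $Y - AX' = P'Q' - A(X'-X) = -P''Q''$ has rank $d-i$; applying Lemma~\ref{lem:coherent-discrepancy-rank} once more gives $\Delta_A(X,Y) = i$ and $\Delta_A(X',Y) = d-i = \delta_A(X,X') - i$, which is what the definition demands.

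I do not anticipate a genuine obstacle: the whole argument is the full-rank decomposition fact, already stated in the preliminaries, combined with the closed forms for $\Delta_A$ and $\delta_A$ from the two preceding lemmas. The only points that warrant a line of care are the degenerate cases $i = 0$ and $i = d$, where one of the blocks $P',Q'$ or $P'',Q''$ is empty and the corresponding product is the zero matrix (of rank $0$), still consistent with the claim; and the elementary observation that a block of columns of a full-column-rank matrix is itself full-column-rank (dually for rows), which is what guarantees that each of $P'Q'$ and $P''Q''$ has the asserted rank.
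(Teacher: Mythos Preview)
Your proposal is correct and follows essentially the same approach as the paper: both arguments take a full-rank decomposition of $A(X'-X)$, split it into two summands of ranks $i$ and $d-i$, and set $Y = AX$ plus the first summand. The only difference is that you spell out the partitioning $P = \mat{P' & P''}$, $Q = \mat{Q' \\ Q''}$ explicitly, whereas the paper simply asserts the existence of $W$ and $W'$ with $W + W' = A(X'-X)$ and the prescribed ranks.
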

\begin{proof}
Let $X,X' \in \Fq^{n \times m}$ and let $0 \leq i \leq d = \delta_A(X,X')$. Then $\rank A(X'-X) = d$. By performing a full-rank decomposition of $A(X'-X)$, we can always find two matrices $W$ and $W'$ such that $W + W' = A(X'-X)$, $\rank W = i$ and $\rank W' = d - i$. Taking $Y = AX + W = AX' - W'$, we have that $\Delta_{A}(X,Y) = i$ and $\Delta_{A}(X',Y) = d-i$.
\end{proof}

Note that, under the discrepancy $\Delta_A(X,Y)$, a $t$-discrepancy-correcting code is a code that can correct \emph{any} $t$ packet errors injected by the adversary. Using Theorem~\ref{thm:coherent-magic} and Theorem~\ref{thm:magic-property-converse}, we have the following result.

\begin{theorem}\label{thm:coherent-correction-capability}
  A code $\mathcal{C}$ is guaranteed to correct any $t$ packet errors if and only if $\delta_{A}(\mathcal{C}) > 2t$.
\end{theorem}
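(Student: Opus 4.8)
The plan is to obtain this theorem as an essentially immediate corollary of the machinery already assembled, rather than by a direct argument. The first thing I would do is pin down what ``guaranteed to correct any $t$ packet errors'' means in terms of the general framework: under the discrepancy function $\Delta_A(\cdot,\cdot)$ of (\ref{eq:coherent-discrepancy}), the fan-out set $\calY_X = \{Y : \Delta_A(X,Y)\le t\}$ is exactly the set of channel outputs the adversary can produce from input $X$ by injecting at most $t$ error packets (with $D$ freely chosen). Hence a code is $t$-discrepancy-correcting for this channel precisely when the transmitted $X$ can always be recovered from the received $Y$ whenever at most $t$ error packets are injected, i.e.\ precisely when $\calC$ is guaranteed to correct any $t$ packet errors. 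This identification is where the specific worst-case model (adversary selects $D$) enters.

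Next I would invoke the general characterization of the discrepancy-correction capability (the proposition stating that $\calC$ is $t$-discrepancy-correcting if and only if $t\le\tau(\calC)$, with $\tau$ as in (\ref{eq:function-exact-capability})). The key substitution is then to replace $\tau(\calC)$ by an explicit quantity: by Theorem~\ref{thm:coherent-magic} the discrepancy $\Delta_A$ is \magic, so Theorem~\ref{thm:magic-property-converse} applies and gives $\tau(\calC) = \lfloor(\delta_A(\calC)-1)/2\rfloor$. Combining these two facts, $\calC$ corrects any $t$ packet errors if and only if $t \le \lfloor(\delta_A(\calC)-1)/2\rfloor$.

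The last step is the elementary arithmetic observation that, for integers $t$ and $\delta_A(\calC)$, the condition $t \le \lfloor(\delta_A(\calC)-1)/2\rfloor$ is equivalent to $2t \le \delta_A(\calC)-1$, i.e.\ to $\delta_A(\calC) > 2t$. Chaining the equivalences yields exactly the asserted ``if and only if.'' One may optionally remark that, since $\delta_A(X,X') = \rank A(X'-X)$ by the preceding lemma, the condition $\delta_A(\calC)>2t$ is a concrete rank-distance requirement on the code (through the transfer matrix $A$).

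I do not expect a genuine obstacle here: every substantive ingredient---Lemma~\ref{lem:coherent-discrepancy-rank}, the \magic property of Theorem~\ref{thm:coherent-magic}, and the general converse of Theorem~\ref{thm:magic-property-converse}---has already been established. The only points requiring a little care are (i) making the identification of ``$t$-discrepancy-correcting'' with ``corrects any $t$ packet errors'' explicit, so that the theorem is correctly phrased in channel terms, and (ii) the bookkeeping with the floor function when converting $t\le\lfloor(\delta_A(\calC)-1)/2\rfloor$ into the strict inequality $\delta_A(\calC)>2t$.
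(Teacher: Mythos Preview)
Your proposal is correct and follows essentially the same approach as the paper: the paper derives the theorem directly from Theorem~\ref{thm:coherent-magic} and Theorem~\ref{thm:magic-property-converse}, after noting that a $t$-discrepancy-correcting code under $\Delta_A$ is exactly one that corrects any $t$ packet errors. Your treatment is slightly more explicit about the floor-function bookkeeping, but the logic is identical.
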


Theorem~\ref{thm:coherent-correction-capability} shows that $\delta_{A}(\mathcal{C})$ is indeed a fundamental parameter characterizing the error correction capability of a code in our model. Note that, if the condition of Theorem~\ref{thm:coherent-correction-capability} is violated, then there exists at least one codeword for which the adversary can certainly induce a decoding failure.

Note that the error correction capability of a code $\mathcal{C}$ is dependent on the network code through the matrix $A$.
Let $\rho = n - \rank A$ be the column-rank deficiency of $A$. Since $\delta_{A}(X,X') = \rank A(X'-X)$, it follows from (\ref{eq:bound-rank-product}) that
\begin{equation}\nonumber
  \dr(X,X') - \rho \leq \delta_{A}(X,X') \leq \dr(X,X')
\end{equation} and
\begin{equation}\label{eq:coherent-delta-rank-distance}
  \dr(\mathcal{C}) - \rho \leq \delta_{A}(\mathcal{C}) \leq \dr(\mathcal{C}).
\end{equation}
Thus, the error correction capability of a code is strongly tied to its minimum rank distance; in particular, $\delta_A(\calC) = \dr(\calC)$ if $\rho = 0$.
While the lower bound $\delta_{A}(\calC) \geq \dr(\calC) - \rho$ may not be tight in general, we should expect it to be tight when $\calC$ is sufficiently large. This is indeed the case for MRD codes, as discussed in Section~\ref{ssec:coherent-optimality}. Thus, a rank deficiency of $A$ will typically reduce the error correction capability of a code.

Taking into account the worst case, we can use Theorem~\ref{thm:coherent-correction-capability} to give a correction guarantee in terms of the minimum rank distance of the code.

\begin{proposition}\label{prop:coherent-guarantee-rank-metric}
  A code $\mathcal{C}$ is guaranteed to correct $t$ packet errors, under rank deficiency $\rho$,
 if $\dr(\mathcal{C}) > 2t + \rho$.
\end{proposition}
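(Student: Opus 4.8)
The plan is to chain together two facts already established in the excerpt: the exact characterization of correction capability in Theorem~\ref{thm:coherent-correction-capability}, and the lower bound on $\delta_A(\calC)$ in terms of the minimum rank distance recorded in~(\ref{eq:coherent-delta-rank-distance}). So the whole argument is essentially a one-line implication, and the only ``work'' is to make sure the inequalities line up with the strict/non-strict conventions in Theorem~\ref{thm:coherent-correction-capability}.

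First I would recall that by~(\ref{eq:coherent-delta-rank-distance}) we have $\delta_A(\calC) \geq \dr(\calC) - \rho$, where $\rho = n - \rank A$ is the column-rank deficiency of $A$; this in turn comes from $\delta_A(X,X') = \rank A(X'-X)$ together with the lower bound in~(\ref{eq:bound-rank-product}) applied to the product $A(X'-X)$. Then, assuming the hypothesis $\dr(\calC) > 2t + \rho$, I would simply substitute:
\begin{equation}\nonumber
  \delta_A(\calC) \;\geq\; \dr(\calC) - \rho \;>\; (2t + \rho) - \rho \;=\; 2t.
\end{equation}
Hence $\delta_A(\calC) > 2t$, and Theorem~\ref{thm:coherent-correction-capability} immediately gives that $\calC$ is guaranteed to correct any $t$ packet errors.

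I do not anticipate a genuine obstacle here: the statement is a corollary of Theorem~\ref{thm:coherent-correction-capability} obtained by relaxing the (network-code-dependent) quantity $\delta_A(\calC)$ to the (network-code-independent) quantity $\dr(\calC) - \rho$. The only point requiring a moment's care is bookkeeping of the inequality directions---the hypothesis is stated with a strict inequality precisely so that after subtracting $\rho$ one still lands strictly above $2t$, matching the ``$\delta_A(\calC) > 2t$'' form of the correction criterion. One might also remark, for context, that the bound is exactly what one expects: when $\rho = 0$ it reduces to the classical-looking condition $\dr(\calC) > 2t$, and in general a rank deficiency of $\rho$ effectively costs $\rho$ units of minimum distance, with tightness for sufficiently large codes (e.g.\ MRD codes) to be discussed in Section~\ref{ssec:coherent-optimality}.
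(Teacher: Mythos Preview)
Your proposal is correct and matches the paper's approach exactly: the paper presents this proposition as an immediate corollary of Theorem~\ref{thm:coherent-correction-capability} together with the lower bound $\delta_A(\calC) \geq \dr(\calC) - \rho$ from~(\ref{eq:coherent-delta-rank-distance}), without even writing out a separate proof. Your one-line chain of inequalities is precisely the intended argument.
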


Note that the guarantee of Proposition~\ref{prop:coherent-guarantee-rank-metric} depends only on $\rho$ and $t$; in particular, it is independent of the network code or the specific transfer matrix $A$.

\subsection{Reinterpreting the Model of Yeung et al.}
\label{ssec:coherent-comparison}

In this subsection, we investigate the model for coherent network coding studied by Yeung et al.\ in \cite{Cai.Yeung2002,Yeung.Cai2006,Cai.Yeung2006,Yang.Yeung2007}, which is similar to the one considered in the previous subsection. The model is that of a matrix channel with input $X \in \Fq^{n \times m}$, output $Y \in \Fq^{N \times m}$, and channel law given by
\begin{equation}\label{eq:channel-model-full}
  Y = AX + FE
\end{equation}
where $A \in \Fq^{N \times n}$ and $F \in \Fq^{N \times |\mathcal{E}|}$ are fixed and known to the receiver, and $E \in \Fq^{|\mathcal{E}| \times m}$ is arbitrarily chosen by an adversary provided $\wt(E) \leq t$. (Recall that $|\calE|$ is the number of edges in the network.) In addition, the adversary has unlimited computational power and is omniscient, knowing, in particular, $A$, $F$ and $X$.


We now show that some of the concepts defined in \cite{Yang.Yeung2007}, such as ``network Hamming distance,'' can be reinterpreted in the framework of Section~\ref{sec:general-approach}. As a consequence, we can easily recover the results of \cite{Yang.Yeung2007} on error correction and detection guarantees.

First, note that the current model can be completely described by the following discrepancy function
\begin{equation}\label{eq:Yeung-discrepancy}
   \Delta_{A,F}(X,Y) \triangleq \min_{\substack{E \in \Fq^{|\calE| \times m}\colon \\ Y = AX + FE}}\, \wt(E).
\end{equation}
The $\Delta$-distance induced by this discrepancy function is given by
\begin{align}
\delta_{A,F}(X_1,X_2)
&\triangleq \min_{Y}\, \Delta_{A,F}(X_1,Y) + \Delta_{A,F}(X_2,Y) \nonumber \\
&= \min_{\substack{Y,E_1,E_2:\\ Y=AX_1 + FE_1\\ Y=AX_2 + FE_2}} \left\{ \wt(E_1) + \wt(E_2) \right\} \nonumber \\
&= \min_{\substack{E_1,E_2\colon\\ A(X_2 - X_1) = F(E_1-E_2)}} \left\{ \wt(E_1) + \wt(E_2) \right\} \nonumber \\
&= \min_{\substack{E\colon\\ A(X_2 - X_1) = FE}}\, \wt(E) \nonumber \label{eq:proof-NHD-1}
\end{align}
where the last equality follows from the fact that $\wt(E_1 - E_2) \leq \wt(E_1) + \wt(E_2)$, achievable if $E_1 = 0$.

Let us now examine some of the concepts defined in \cite{{Yang.Yeung2007}}. For a specific sink node, the decoder proposed in \cite[Eq. (2)]{Yang.Yeung2007} has the form
\begin{equation}\nonumber
  \hat{X} = \argmin_{X \in \mathcal{C}}\, \Psi_{A,F}(X,Y).
\end{equation}
The definition of the objective function $\Psi_{A,F}(X,Y)$ requires several other definitions presented in \cite{Yang.Yeung2007}. Specifically, $\Psi_{A,F}(X,Y) \triangleq D^{rec}(AX,Y)$, where $D^{rec}(Y_1, Y_2) \triangleq W^{rec}(Y_2 - Y_1)$, $W^{rec}(Y) \triangleq \min_{E \in \Upsilon(Y)}\, \wt(E)$, and $\Upsilon(Y) \triangleq \{E\colon Y=FE\}$.
Substituting all these values into $\Psi_{A,F}(X,Y)$, we obtain
\begin{align}
\Psi_{A,F}(X,Y)
&= D^{rec}(AX,Y) \nonumber \\
&= W^{rec}(Y-AX) \nonumber \\
&= \min_{E \in \Upsilon(Y-AX)}\, \wt(E) \nonumber \\
&= \min_{E\colon Y-AX = FE}\, \wt(E) \nonumber \\
&= \Delta_{A,F}(X,Y). \nonumber
\end{align}
Thus, the decoder in \cite{Yang.Yeung2007} is precisely a minimum-discrepancy decoder.


In \cite{Yang.Yeung2007}, the ``network Hamming distance'' between two messages $X_1$ and $X_2$ is defined as $D^{msg}(X_1,X_2) \triangleq W^{msg}(X_2 - X_1)$, where $W^{msg}(X) \triangleq W^{rec}(AX)$. Again, simply substituting the corresponding definitions yields
\begin{align}
D^{msg}(X_1,X_2)
&= W^{msg}(X_2 - X_1) \nonumber \\
&= W^{rec}(A(X_2 - X_1)) \nonumber \\
&= \min_{E \in \Upsilon(A(X_2 - X_1))}\, \wt(E) \nonumber \\
&= \min_{E \colon A(X_2 - X_1) = FE}\, \wt(E) \nonumber \\
&= \delta_{A,F}(X_1,X_2). \nonumber
\end{align}
Thus, the ``network Hamming distance'' is precisely the $\Delta$-distance induced by the discrepancy function $\Delta_{A,F}(X,Y)$. Finally, the ``unicast minimum distance'' of a network code with message set $\mathcal{C}$ \cite{Yang.Yeung2007} is precisely $\delta_{A,F}(\calC)$.

Let us return to the problem of characterizing the correction capability of a code.

\begin{proposition}
 The discrepancy function $\Delta_{A,F}(\cdot,\cdot)$ is \magic.
\end{proposition}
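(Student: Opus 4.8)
The plan is to mirror the proof of Theorem~\ref{thm:coherent-magic} but work with the support (set of nonzero rows) of an error matrix rather than with its rank. First I would compute a closed form for $\delta_{A,F}(X_1,X_2)$ that exposes a single "optimal" error matrix. From the derivation already in the text, $\delta_{A,F}(X_1,X_2) = \min\{\wt(E)\colon A(X_2-X_1) = FE\}$; fix any minimizing $E^\star$ and let $d = \wt(E^\star)$ and let $J \subseteq \{1,\dots,|\calE|\}$ be the set of $d$ indices of its nonzero rows. The key observation is that the support can be split arbitrarily: for any $0 \le i \le d$, partition $J = J' \sqcup J''$ with $|J'| = i$, and write $E^\star = E' + E''$ where $E'$ keeps the rows of $E^\star$ indexed by $J'$ and zeros elsewhere, and $E''$ keeps the rows indexed by $J''$. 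Then $\wt(E') = i$, $\wt(E'') = d-i$, and $FE' + FE'' = FE^\star = A(X_2 - X_1)$.

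The second step is to produce the witness output $Y$. Set $Y = AX_1 + FE' = AX_2 - FE''$ (the two expressions agree precisely because $FE' + FE'' = A(X_2-X_1)$). This gives $\Delta_{A,F}(X_1,Y) \le \wt(E') = i$ and $\Delta_{A,F}(X_2,Y) \le \wt(E'') = d-i$. For the matching lower bounds I would invoke the definition of $\delta_{A,F}$: since $\Delta_{A,F}(X_1,Y) + \Delta_{A,F}(X_2,Y) \ge \delta_{A,F}(X_1,X_2) = d$ and each term is bounded above as just shown, both inequalities must in fact be equalities, i.e.\ $\Delta_{A,F}(X_1,Y) = i$ and $\Delta_{A,F}(X_2,Y) = d-i$. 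This is exactly the \magic\ condition of Definition~\ref{def:magic-property}.

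The one point that needs a touch more care than in the rank-metric case is making sure $d$ is finite, i.e.\ that the feasible set $\{E\colon A(X_2-X_1)=FE\}$ is nonempty; otherwise $\delta_{A,F}(X_1,X_2) = \infty$ and there is nothing to check. I would note that this is guaranteed by the standing assumption of the model that $F$ is (part of) the network transfer structure carrying all edge errors to the sink, so that $A$'s column space is contained in $F$'s column space, making $A(X_2-X_1)$ always expressible as $FE$ for some $E$; alternatively one simply restricts attention to pairs $X_1,X_2$ for which $\delta_{A,F}(X_1,X_2) < \infty$, since the \magic\ condition is vacuous otherwise. I expect the main (minor) obstacle to be stating this finiteness caveat cleanly rather than any substantive difficulty — the support-splitting argument itself is immediate once the closed form for $\delta_{A,F}$ is in hand.
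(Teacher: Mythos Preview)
Your proof is correct and follows essentially the same route as the paper: pick a minimizer $E^\star$ of $\wt(E)$ subject to $A(X_2-X_1)=FE$, split its nonzero rows into two disjoint groups of sizes $i$ and $d-i$, set $Y = AX_1 + FE'$, and use the $\delta$-lower bound to upgrade the two inequalities to equalities. Your extra paragraph on finiteness of $\delta_{A,F}$ is a reasonable caveat that the paper simply leaves implicit.
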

\begin{proof}
 Let $X_1,X_2 \in \Fq^{n \times m}$ and let $0 \leq i \leq d = \delta_{A,F}(X_1,X_2)$. Let $E \in \Fq^{|\calE| \times m}$ be a solution to the minimization in (\ref{eq:Yeung-discrepancy}). Then $A(X_2 - X_1) = FE$ and $\wt(E) = d$. By partitioning $E$, we can always find two matrices $E_1$ and $E_2'$ such that $E_1 + E_2' = E$, $\rank E_1 = i$ and $\rank E_2' = d - i$. Taking $Y = AX_1 + FE_1 = AX_2 - FE_2$, we have that $\Delta_{A,F}(X_1,Y) \leq i$ and $\Delta_{A,F}(X_2,Y) \leq d-i$. Since $d \leq \Delta_{A,F}(X_1,Y)  + \Delta_{A,F}(X_2,Y)$, it follows that $\Delta_{A,F}(X_1,Y) = i$ and $\Delta_{A,F}(X_2,Y) = d-i$.
\end{proof}

It follows that a code $\calC$ is guaranteed to correct any $t$ packet errors if and only if $\delta_{A,F}(\calC) > 2t$.
Thus, we recover theorems 2 and 3 in \cite{Yang.Yeung2007} (for error detection, see Appendix~\ref{sec:detection-capability}). The analogous results for the multicast case can be obtained in a straightforward manner.

We now wish to compare the parameters devised in this subsection with those of Section~\ref{ssec:coherent-main}. From the descriptions of (\ref{eq:matrix-model}) and (\ref{eq:channel-model-full}), it is intuitive that the model of this subsection should be equivalent to that of the previous subsection if the matrix $F$, rather than fixed and known to the receiver, is arbitrarily and secretly chosen by the adversary. A formal proof of this fact is given in the following proposition.

\begin{proposition}\label{prop:coherent-Yeung-comparison}
\begin{align}
\Delta_{A}(X,Y) &= \min_{F \in \Fq^{N \times |\calE|}}\, \Delta_{A,F}(X,Y) \nonumber \\
\delta_{A}(X,X') &= \min_{F \in \Fq^{N \times |\calE|}}\,\delta_{A,F}(X,X') \nonumber \\
\delta_{A}(\mathcal{C}) &= \min_{F \in \Fq^{N \times |\calE|}}\,\delta_{A,F}(\mathcal{C}). \nonumber
\end{align}
\end{proposition}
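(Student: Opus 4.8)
The plan is to reduce all three identities to one elementary fact about matrix factorizations: among all ways of writing a fixed matrix $M$ as a product $FE$, the minimum possible value of $\wt(E)$ equals $\rank M$. Combined with Lemma~\ref{lem:coherent-discrepancy-rank} (which gives $\Delta_A(X,Y)=\rank(Y-AX)$) and with the identity $\delta_A(X,X')=\rank A(X'-X)$ already established, this will dispatch the proposition quickly.

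First I would prove the discrepancy identity $\Delta_A(X,Y)=\min_F \Delta_{A,F}(X,Y)$. For ``$\le$'': for any $F$ and any $E$ feasible in (\ref{eq:Yeung-discrepancy}) we have $Y-AX=FE$, so by (\ref{eq:bound-rank-product}) and the preliminary inequality $\rank E\le\wt(E)$ we get $\rank(Y-AX)=\rank(FE)\le\rank E\le\wt(E)$; minimizing over $E$ and then over $F$ yields $\Delta_A(X,Y)=\rank(Y-AX)\le\min_F\Delta_{A,F}(X,Y)$. For ``$\ge$'': set $r=\rank(Y-AX)$ and take a full-rank decomposition $Y-AX=DZ$ with $D\in\Fq^{N\times r}$ and $Z\in\Fq^{r\times m}$; pad $D$ with zero columns to form $F\in\Fq^{N\times|\calE|}$ and pad $Z$ with zero rows to form $E\in\Fq^{|\calE|\times m}$ (legitimate because $|\calE|\ge N\ge r$), so that $FE=Y-AX$ and $\wt(E)\le r$, whence $\min_F\Delta_{A,F}(X,Y)\le\wt(E)\le r=\Delta_A(X,Y)$.

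Next, for the second identity I would use the closed form for $\delta_{A,F}$ derived earlier in this subsection to write $\min_F\delta_{A,F}(X,X')$ as the joint minimization $\min\{\wt(E)\colon F\in\Fq^{N\times|\calE|},\ E\in\Fq^{|\calE|\times m},\ A(X'-X)=FE\}$; the argument above then applies verbatim, giving $\wt(E)\ge\rank(FE)=\rank A(X'-X)=\delta_A(X,X')$ with equality attained by the padded full-rank decomposition of $A(X'-X)$. (Equivalently, this can be deduced from the first identity by interchanging the minimizations in $\min_F\min_Y\{\Delta_{A,F}(X,Y)+\Delta_{A,F}(X',Y)\}$ and observing that an $F$ realizing $\Delta_{A,F}(X',AX)=\Delta_A(X',AX)$ also gives $\Delta_{A,F}(X,AX)=0$.) Finally, the third identity is just the interchange of two finite minimizations followed by the second identity: $\min_F\delta_{A,F}(\calC)=\min_F\min_{X\ne X'\in\calC}\delta_{A,F}(X,X')=\min_{X\ne X'\in\calC}\min_F\delta_{A,F}(X,X')=\min_{X\ne X'\in\calC}\delta_A(X,X')=\delta_A(\calC)$.

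I do not anticipate a real obstacle. The only points requiring care are keeping the direction of $\rank E\le\wt(E)$ straight---it is this inequality, not its reverse, that drives the lower bounds---and the dimension bookkeeping in the padded construction, which needs $|\calE|\ge\rank(Y-AX)$; this holds because the rank of an $N\times m$ matrix is at most $N$ and the number of network edges $|\calE|$ is at least the number $N$ of packets received at the sink. Everything else is an immediate consequence of full-rank decomposition.
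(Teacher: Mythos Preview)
Your proposal is correct and follows essentially the same approach as the paper. Both arguments hinge on the inequality chain $\wt(E)\ge\rank E\ge\rank FE=\rank(Y-AX)$ for the lower bound and a zero-padded full-rank decomposition for achievability; the only cosmetic difference is that for the second identity the paper observes directly that $\delta_A(X,X')=\Delta_A(X,AX')$ and $\delta_{A,F}(X,X')=\Delta_{A,F}(X,AX')$, which reduces it to the first identity in one line, whereas you re-run the argument (your parenthetical alternative is essentially this observation).
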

\begin{proof}
Consider the minimization
\begin{equation}\nonumber
  \min_{F \in \Fq^{N \times |\calE|}}\, \Delta_{A,F}(X,Y) = \min_{\substack{F \in \Fq^{N \times |\calE|},\, E \in \Fq^{|\calE| \times m}\colon \\ Y = AX + FE}}\, \wt(E).
\end{equation}
For any feasible $(F,E)$, we have $\wt(E) \geq \rank E \geq \rank FE = \rank (Y-AX)$. This lower bound can be achieved by taking
\begin{equation}\nonumber
F = \mat{F' & 0} \quad\text{and}\quad E = \mat{E' \\ 0}
\end{equation}
where $F'E'$ is a full-rank decomposition of $Y-AX$. This proves the first statement. The second statement follows from the first by noticing that $\delta_A(X,X') = \Delta_A(X,AX')$ and $\delta_{A,F}(X,X') = \Delta_{A,F}(X,AX')$. The third statement is immediate.
\end{proof}

Proposition~\ref{prop:coherent-Yeung-comparison} shows that the model of Section~\ref{ssec:coherent-main} is indeed more pessimistic, as the adversary has additional power to choose the worst possible $F$. It follows that any code that is $t$-error-correcting for that model must also be $t$-error-correcting for the model of Yeung et al.


\subsection{Optimality of MRD Codes}
\label{ssec:coherent-optimality}


Let us now evaluate the performance of an MRD code under the models of the two previous subsections.

The Singleton bound of \cite{Yeung.Cai2006} (see also \cite{Yang.Yeung2007:RefinedBounds}) states that
\begin{equation}\label{eq:Singleton-Yeung}
  |\mathcal{C}| \leq Q^{n-\rho - \delta_{A,F}(\mathcal{C}) + 1}
\end{equation}
where $Q$ is the size of the alphabet\footnote{This alphabet is usually assumed a finite field, but, for the Singleton bound of \cite{Yeung.Cai2006}, it is sufficient to assume an abelian group, e.g., a vector space over $\Fq$.} from which packets are drawn. Note that $Q = q^m$ in our setting, since each packet consists of $m$ symbols from $\Fq$. Using Proposition~\ref{prop:coherent-Yeung-comparison}, we can also obtain
\begin{equation}\label{eq:Singleton-coherent}
  |\mathcal{C}| \leq Q^{n-\rho - \delta_{A}(\mathcal{C}) + 1} = q^{m(n-\rho - \delta_{A}(\mathcal{C}) + 1)}.
\end{equation}

On the other hand, the size of an MRD code, for $m \geq n$, is given by
\begin{align}
|\mathcal{C}|
&= q^{m(n-\dr(\mathcal{C})+1)} \label{eq:coherent-MRD-Singleton} \\
&\geq q^{m(n-\rho-\delta_{A}(\mathcal{C})+1)} \label{eq:coherent-MRD-proof-1} \\
&\geq q^{m(n-\rho-\delta_{A,F}(\mathcal{C})+1)} \nonumber
\end{align}
where (\ref{eq:coherent-MRD-proof-1}) follows from (\ref{eq:coherent-delta-rank-distance}). Since $Q = q^m$, both (\ref{eq:Singleton-Yeung}) and (\ref{eq:Singleton-coherent}) are achieved in this case. Thus, we have the following result.

\begin{theorem}\label{thm:coherent-optimality-MRD}
  When $m \geq n$, an MRD code $\calC \subseteq \Fq^{n \times m}$ achieves maximum cardinality with respect to both $\delta_{A}$ and $\delta_{A,F}$.
\end{theorem}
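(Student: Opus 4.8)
The plan is to exhibit the MRD code $\calC$ as a code that simultaneously meets the two Singleton-type upper bounds already available in this subsection: bound~(\ref{eq:Singleton-coherent}) associated with $\delta_A$, and bound~(\ref{eq:Singleton-Yeung}) (with $Q = q^m$) associated with $\delta_{A,F}$. The starting point is the fact that, for $m \geq n$, an MRD code achieves the rank-metric Singleton bound with equality, i.e., $|\calC| = q^{m(n - \dr(\calC) + 1)}$, as recorded in~(\ref{eq:coherent-MRD-Singleton}). Everything then reduces to comparing this cardinality against $q^{m(n - \rho - \delta_A(\calC) + 1)}$ and $q^{m(n - \rho - \delta_{A,F}(\calC) + 1)}$ and checking that the inequalities already proved above all point in the right direction.

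First I would use the lower bound $\delta_A(\calC) \geq \dr(\calC) - \rho$ from~(\ref{eq:coherent-delta-rank-distance}), which yields $n - \rho - \delta_A(\calC) + 1 \leq n - \dr(\calC) + 1$ and hence, since $q^{m(\cdot)}$ is increasing in its exponent,
\[
|\calC| \;=\; q^{m(n - \dr(\calC) + 1)} \;\geq\; q^{m(n - \rho - \delta_A(\calC) + 1)}.
\]
Next I would invoke Proposition~\ref{prop:coherent-Yeung-comparison}, according to which $\delta_A(\calC) = \min_F \delta_{A,F}(\calC) \leq \delta_{A,F}(\calC)$; this gives $n - \rho - \delta_A(\calC) + 1 \geq n - \rho - \delta_{A,F}(\calC) + 1$, so that $q^{m(n - \rho - \delta_A(\calC) + 1)} \geq q^{m(n - \rho - \delta_{A,F}(\calC) + 1)}$. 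Chaining the two displays produces $|\calC| \geq q^{m(n - \rho - \delta_A(\calC) + 1)} \geq q^{m(n - \rho - \delta_{A,F}(\calC) + 1)}$, which is exactly the chain sketched in~(\ref{eq:coherent-MRD-proof-1}).

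Finally, the Singleton bound~(\ref{eq:Singleton-coherent}) gives $|\calC| \leq q^{m(n - \rho - \delta_A(\calC) + 1)}$, and~(\ref{eq:Singleton-Yeung}) with $Q = q^m$ gives $|\calC| \leq q^{m(n - \rho - \delta_{A,F}(\calC) + 1)}$. Combining these upper bounds with the lower chain forces all three quantities to coincide, so $\calC$ attains both~(\ref{eq:Singleton-coherent}) and~(\ref{eq:Singleton-Yeung}) with equality, i.e., it achieves maximum cardinality with respect to both $\delta_A$ and $\delta_{A,F}$. There is no real analytical obstacle here: the substantive work (that $\Delta_A$ is \magic, the validity of~(\ref{eq:coherent-delta-rank-distance}), and the model comparison of Proposition~\ref{prop:coherent-Yeung-comparison}) has already been done, and the only care required is bookkeeping the exponents and confirming the directions of the inequalities so that the sandwich closes.
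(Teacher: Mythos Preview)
Your proposal is correct and follows essentially the same route as the paper: start from the MRD cardinality $|\calC| = q^{m(n-\dr(\calC)+1)}$, use (\ref{eq:coherent-delta-rank-distance}) and Proposition~\ref{prop:coherent-Yeung-comparison} to obtain the chain $|\calC| \geq q^{m(n-\rho-\delta_A(\calC)+1)} \geq q^{m(n-\rho-\delta_{A,F}(\calC)+1)}$, and then close against the Singleton bounds (\ref{eq:Singleton-coherent}) and (\ref{eq:Singleton-Yeung}). The paper's argument is slightly terser but identical in substance.
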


Theorem~\ref{thm:coherent-optimality-MRD} shows that, if an alphabet of size $Q = q^m \geq q^n$ is allowed (i.e., a packet size of at least $n \log_2 q$ bits), then MRD codes turn out to be optimal under both models of sections \ref{ssec:coherent-main} and \ref{ssec:coherent-comparison}.

\begin{remark}
  It is straightforward to extend the results of Section~\ref{ssec:coherent-main} for the case of multiple heterogeneous receivers, where each receiver $u$ experiences a rank deficiency $\rho^{(u)}$. In this case, it can be shown that an MRD code with $m \geq n$ achieves the refined Singleton bound of \cite{Yang.Yeung2007:RefinedBounds}.
\end{remark}

Note that, due to (\ref{eq:Singleton-coherent}), (\ref{eq:coherent-MRD-Singleton}) and (\ref{eq:coherent-MRD-proof-1}), it follows that $\delta_A(\calC) = \dr(\calC) - \rho$ for an MRD code with $m \geq n$. Thus, in this case, we can restate Theorem~\ref{thm:coherent-correction-capability} in terms of the minimum rank distance of the code.

\begin{theorem}\label{thm:coherent-complete-capability}
  An MRD code $\mathcal{C} \subseteq \Fq^{n \times m}$ with $m \geq n$ is guaranteed to correct $t$ packet errors, under rank deficiency $\rho$,
 if and only if $\dr(\mathcal{C}) > 2t + \rho$.
\end{theorem}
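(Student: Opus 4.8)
The plan is to derive the statement as an immediate consequence of Theorem~\ref{thm:coherent-correction-capability} together with one extra identity specific to MRD codes. Theorem~\ref{thm:coherent-correction-capability} already says that $\calC$ is guaranteed to correct any $t$ packet errors if and only if $\delta_A(\calC) > 2t$, so all I really need is that, for an MRD code $\calC \subseteq \Fq^{n\times m}$ with $m \ge n$ and a transfer matrix $A$ of column-rank deficiency $\rho$, one has $\delta_A(\calC) = \dr(\calC) - \rho$. Granting this identity, the chain $\delta_A(\calC) > 2t \iff \dr(\calC) - \rho > 2t \iff \dr(\calC) > 2t + \rho$ is purely arithmetic and finishes the proof.

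The substantive step is therefore the identity $\delta_A(\calC) = \dr(\calC) - \rho$, which I would establish by a two-sided squeeze. The lower bound $\delta_A(\calC) \ge \dr(\calC) - \rho$ is already in hand: it is the left inequality of (\ref{eq:coherent-delta-rank-distance}), obtained by writing $\delta_A(X,X') = \rank A(X'-X)$ and invoking the rank product inequality (\ref{eq:bound-rank-product}). For the reverse inequality I would compare cardinalities. The Singleton-type bound (\ref{eq:Singleton-coherent}) gives $|\calC| \le q^{m(n-\rho-\delta_A(\calC)+1)}$, while MRD-ness with $m \ge n$ gives the exact value $|\calC| = q^{m(n-\dr(\calC)+1)}$ from (\ref{eq:coherent-MRD-Singleton}). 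Equating the lower and upper estimates of $|\calC|$ and comparing exponents (both being nonnegative integers) forces $n - \dr(\calC) + 1 \le n - \rho - \delta_A(\calC) + 1$, i.e.\ $\delta_A(\calC) \le \dr(\calC) - \rho$. Combining the two bounds yields $\delta_A(\calC) = \dr(\calC) - \rho$, and substituting this into Theorem~\ref{thm:coherent-correction-capability} gives the claimed ``if and only if''.

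I expect the only delicate point to be bookkeeping rather than genuine difficulty: one must check that the Singleton bound (\ref{eq:Singleton-coherent}) is being applied to the specific $A$ whose deficiency is $\rho$ (which is exactly the hypothesis that bound requires), and that the hypothesis $m \ge n$ is really used here --- it is what makes the MRD cardinality equal $q^{m(n-\dr(\calC)+1)}$ (rather than $q^{n(m-\dr(\calC)+1)}$), so that the exponents in the two cardinality expressions line up. One should also restrict attention to the regime in which these Singleton-type bounds are informative (e.g.\ $\rho < \dr(\calC)$); outside it the ``if'' direction is vacuous and there is nothing to prove. No nontrivial computation is needed beyond these checks.
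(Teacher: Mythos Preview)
Your proposal is correct and follows essentially the same route as the paper: the paper also derives the identity $\delta_A(\calC) = \dr(\calC) - \rho$ for an MRD code with $m \ge n$ by squeezing between (\ref{eq:coherent-delta-rank-distance}) and the Singleton bound (\ref{eq:Singleton-coherent}) combined with the MRD cardinality (\ref{eq:coherent-MRD-Singleton}), and then invokes Theorem~\ref{thm:coherent-correction-capability}. Your bookkeeping remarks about the role of $m \ge n$ and the edge case $\rho \ge \dr(\calC)$ are appropriate and are not spelled out in the paper.
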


Observe that Theorem~\ref{thm:coherent-complete-capability} holds regardless of the specific transfer matrix $A$, depending only on its column-rank deficiency $\rho$.

The results of this section imply that, when designing a linear network code, we may focus solely on the objective of making the network code feasible, i.e., maximizing $\rank A$. 
If an error correction guarantee is desired, then an outer code can be applied end-to-end without requiring any modifications on (or even knowledge of) the underlying network code. The design of the outer code is essentially trivial, as any MRD code can be used, with the only requirement that the number of $\Fq$-symbols per packet, $m$, is at least $n$.

\begin{remark}
  Consider the decoding rule (\ref{eq:coherent-decoding-rule}). The fact that (\ref{eq:coherent-decoding-rule}) together with (\ref{eq:coherent-discrepancy-rank}) is equivalent to \cite[Eq. (20)]{Silva++2008} implies that the decoding problem can be solved by exactly the same rank-metric techniques proposed in \cite{Silva++2008}. In particular, for certain MRD codes with $m \geq n$ and minimum rank distance $d$, there exist efficient encoding and decoding algorithms both requiring $O(dn^2m)$ operations in $\Fq$ per codeword. For more details, see \cite{Silva2009(PhD)}.
\end{remark}

\section{Noncoherent Network Coding}
\label{sec:noncoherent-network-coding}

\subsection{A Worst-Case Model and the Injection Metric}
\label{ssec:noncoherent-main}

Our model for noncoherent network coding with adversarial errors differs from its coherent counterpart of Section~\ref{ssec:coherent-main} only with respect to the transfer matrix $A$. Namely, the matrix $A$ is unknown to the receiver and is freely chosen by the adversary while respecting the constraint $\rank A \geq n - \rho$. The parameter $\rho$, the maximum column rank deficiency of $A$, is a parameter of the system that is known to all. Note that, as discussed above for the matrix $D$, the assumption that $A$ is chosen by the adversary is what provides the conservative (worst-case) nature of the model. The constraint on the rank of $A$ is required for a meaningful coding problem; otherwise, the adversary could prevent communication by simply choosing $A=0$.

As before, we assume a minimum-discrepancy decoder
\begin{equation}\label{eq:noncoherent-decoding-rule}
  \hat{X} = \argmin_{X \in \mathcal{C}}\, \Delta_{\rho}(X,Y)
\end{equation}
with discrepancy function given by
\begin{align}
\Delta_{\rho}(X,Y) &\triangleq \min_{\substack{A \in \Fq^{N \times n},r \in \mathbb{N}, D \in \Fq^{N \times r},Z \in \Fq^{r \times m}\colon \\ Y = AX + DZ \\ \rank A \geq n-\rho}}\, r \label{eq:noncoherent-discrepancy} \\
&= \min_{\substack{A \in \Fq^{N \times n}:\\ \rank A \geq n - \rho}}\, \Delta_{A}(X,Y). \nonumber
\end{align}
Again, $\Delta_{\rho}(X,Y)$ represents the minimum number of error packets needed to produce an output $Y$ given an input $X$ under the current adversarial model. The subscript is to emphasize that $\Delta_{\rho}(X,Y)$ is still a function of $\rho$.

The $\Delta$-distance induced by $\Delta_\rho(X,Y)$ is defined below. For $X,X' \in \Fq^{n \times m}$, let
\begin{equation}\label{eq:noncoherent-delta-between-codewords}
  \delta_{\rho}(X,X') \triangleq \min_{Y \in \Fq^{N \times m}} \left\{ \Delta_{\rho}(X,Y) + \Delta_{\rho}(X',Y) \right\}.
\end{equation}

We now prove that $\Delta_\rho(X,Y)$ is {\magic} and therefore $\delta_\rho(\calC)$ characterizes the correction capability of a code.

First, observe that, using Lemma~\ref{lem:coherent-discrepancy-rank}, we may rewrite $\Delta_{\rho}(X,Y)$ as
\begin{equation}\label{eq:noncoherent-discrepancy-rank}
  \Delta_{\rho}(X,Y) = \min_{\substack{A \in \Fq^{N \times n}:\\ \rank A \geq n - \rho}} \rank(Y - AX).
\end{equation}
Also, note that
  \begin{align}
  \delta_{\rho}(X,X')
  &= \min_{Y} \left\{ \Delta_{\rho}(X,Y) + \Delta_{\rho}(X',Y) \right\} \nonumber \\
  &= \min_{\substack{A,A' \in \Fq^{N \times n}:\\ \rank A \geq n - \rho\\ \rank A' \geq n - \rho}} \{ \min_{Y}\, \rank(Y - AX) \nonumber \\[-5ex]
  &\hspace{8.5em} {} + \rank(Y - A'X') \} \nonumber \\[2ex]
  &= \min_{\substack{A,A' \in \Fq^{N \times n}:\\ \rank A \geq n - \rho\\ \rank A' \geq n - \rho}} \rank(A'X' - AX) \label{eq:noncoherent-delta-expression-rank}
  \end{align}
where the last equality follows from the fact that $\dr(AX,Y) + \dr(A'X',Y) \geq \dr(AX, A'X')$, achievable by choosing, e.g., $Y = AX$.

\begin{theorem}\label{thm:noncoherent-magic}
  The discrepancy function $\Delta_\rho(\cdot,\cdot)$ is \magic.
\end{theorem}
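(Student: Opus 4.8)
The plan is to show that for any $X, X' \in \Fq^{n\times m}$ and any $0 \le i \le d = \delta_\rho(X,X')$, there is some $Y$ with $\Delta_\rho(X,Y) = i$ and $\Delta_\rho(X',Y) = d-i$. First I would fix matrices $A, A'$ that attain the minimum in~(\ref{eq:noncoherent-delta-expression-rank}), so that $\rank A \ge n-\rho$, $\rank A' \ge n-\rho$, and $\rank(A'X' - AX) = d$. The natural candidate construction, mimicking the proof of Theorem~\ref{thm:coherent-magic}, is to perform a full-rank decomposition of $A'X' - AX$, split it as $W + W'$ with $\rank W = i$ and $\rank W' = d-i$, and set $Y = AX + W = A'X' - W'$. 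By Lemma~\ref{lem:coherent-discrepancy-rank} this immediately gives $\Delta_A(X,Y) = i$ and $\Delta_{A'}(X',Y) = d-i$, hence $\Delta_\rho(X,Y) \le i$ and $\Delta_\rho(X',Y) \le d-i$ since $\Delta_\rho = \min_A \Delta_A$ over admissible $A$.

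The remaining work, and the main obstacle, is the matching lower bound: I must rule out the possibility that $\Delta_\rho(X,Y) < i$ or $\Delta_\rho(X',Y) < d-i$ for this particular $Y$. Suppose toward a contradiction that $\Delta_\rho(X,Y) = i_1 < i$, witnessed by some $A_1$ with $\rank A_1 \ge n-\rho$; then $\rank(Y - A_1 X) = i_1$. Combined with $\Delta_\rho(X',Y) = d_2 \le d - i$, witnessed by some $A_2$, we would get $\rank(A_2 X' - A_1 X) \le \rank(A_2 X' - Y) + \rank(Y - A_1 X) = d_2 + i_1 < i + (d-i) = d$, contradicting the minimality of $d = \delta_\rho(X,X')$ over all admissible pairs, since $(A_1, A_2)$ is itself an admissible pair in~(\ref{eq:noncoherent-delta-expression-rank}). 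The same argument symmetrically handles $\Delta_\rho(X',Y) < d-i$. Hence $\Delta_\rho(X,Y) \ge i$ and $\Delta_\rho(X',Y) \ge d-i$; together with the upper bounds and $\Delta_\rho(X,Y) + \Delta_\rho(X',Y) \ge \delta_\rho(X,X') = d$, both inequalities are forced to equalities.

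The one subtlety to check is that the full-rank decomposition of $A'X' - AX$ into $W + W'$ with prescribed ranks $i$ and $d-i$ is always available; this is exactly the partition-of-a-full-rank-decomposition fact recalled in Section~\ref{sec:preliminaries} (writing $PQ = P'Q' + P''Q''$ with the ranks adding up), so no new argument is needed there. Thus the only genuinely new ingredient relative to Theorem~\ref{thm:coherent-magic} is the triangle-inequality argument pinning down the lower bound, which leverages the fact that $\delta_\rho$ is itself defined as a minimum over admissible transfer-matrix pairs.
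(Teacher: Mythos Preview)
Your proposal is correct and follows essentially the same approach as the paper: fix optimizing $A,A'$ in (\ref{eq:noncoherent-delta-expression-rank}), split a full-rank decomposition of $A'X'-AX$ into pieces of rank $i$ and $d-i$, set $Y = AX + W = A'X' - W'$, and conclude. The only difference is that your lower-bound step is more elaborate than necessary: the inequality $\Delta_\rho(X,Y) + \Delta_\rho(X',Y) \ge \delta_\rho(X,X') = d$ already holds immediately from the definition (\ref{eq:noncoherent-delta-between-codewords}) of $\delta_\rho$ as a minimum over $Y$, so the paper simply invokes that one line rather than your contradiction argument via (\ref{eq:noncoherent-delta-expression-rank}) and the rank-metric triangle inequality---your detour re-derives exactly that definitional bound.
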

\begin{proof}
Let $X,X' \in \Fq^{n \times m}$ and let $0 \leq i \leq d = \delta_\rho(X,X')$. Let $A,A' \in \Fq^{N \times n}$ be a solution to the minimization in (\ref{eq:noncoherent-delta-expression-rank}). Then $\rank (A'X' - AX) = d$. By performing a full-rank decomposition of $A'X'-AX$, we can always find two matrices $W$ and $W'$ such that $W + W' = A'X'-AX$, $\rank W = i$ and $\rank W' = d - i$. Taking $Y = AX + W = AX' - W'$, we have that $\Delta_\rho(X,Y) \leq i$ and $\Delta_\rho(X',Y) \leq d-i$.
Since $d \leq \Delta_\rho(X,Y) + \Delta_\rho(X',Y)$, it follows that $\Delta_\rho(X,Y) = i$ and $\Delta_\rho(X',Y) = d-i$.
\end{proof}

As a consequence of Theorem~\ref{thm:noncoherent-magic}, we have the following result.

\begin{theorem}\label{thm:noncoherent-capability}
  A code $\mathcal{C}$ is guaranteed to correct any $t$ packet errors if and only if $\delta_\rho(\mathcal{C}) > 2t$.
\end{theorem}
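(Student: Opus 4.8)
The plan is to derive this theorem as an immediate corollary of the general machinery of Section~\ref{sec:general-approach} applied to the discrepancy function $\Delta_\rho$. First I would make the (routine) observation that the noncoherent channel of Section~\ref{ssec:noncoherent-main} is exactly an adversarial channel of the form \eqref{eq:output-sets}: by the very definition \eqref{eq:noncoherent-discrepancy}, the set of outputs $Y$ that the adversary can produce from input $X$ by injecting at most $t$ error packets --- over all admissible $A$ with $\rank A \geq n-\rho$ and all $D$ --- is precisely $\{Y : \Delta_\rho(X,Y) \leq t\}$. Hence ``$\calC$ is guaranteed to correct any $t$ packet errors'' is, by definition, the same as ``$\calC$ is $t$-discrepancy-correcting'' for $\Delta_\rho$, so the claim becomes a statement purely about $\tau(\calC)$ and $\delta_\rho(\calC)$.

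Next I would invoke the proposition of Section~\ref{sec:general-approach} characterizing the discrepancy-correction capability, namely that $\calC$ is $t$-discrepancy-correcting if and only if $t \leq \tau(\calC)$, with $\tau$ as in \eqref{eq:function-exact-capability}. Since Theorem~\ref{thm:noncoherent-magic} asserts that $\Delta_\rho(\cdot,\cdot)$ is \magic, Theorem~\ref{thm:magic-property-converse} applies and gives $\tau(\calC) = \lfloor(\delta_\rho(\calC)-1)/2\rfloor$. Combining the two, $\calC$ is guaranteed to correct any $t$ packet errors if and only if $t \leq \lfloor(\delta_\rho(\calC)-1)/2\rfloor$. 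Finally I would unwind the floor: since $t$ is an integer, $t \leq \lfloor(\delta_\rho(\calC)-1)/2\rfloor$ is equivalent to $2t \leq \delta_\rho(\calC)-1$, i.e.\ to $\delta_\rho(\calC) > 2t$, which is the claimed equivalence.

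There is essentially no hard step remaining here --- the substantive content (the converse direction, i.e.\ that violating $\delta_\rho(\calC) > 2t$ forces a codeword on which decoding can fail) has already been absorbed into Theorem~\ref{thm:noncoherent-magic} and the abstract Theorem~\ref{thm:magic-property-converse}. The only point that deserves a moment's care is the first paragraph's translation: one must confirm that \emph{all} of the adversary's degrees of freedom (the choice of $A$ subject to the rank constraint, of $D$, and of $Z$) have genuinely been folded into $\Delta_\rho$, so that the fan-out sets really do have the shape \eqref{eq:output-sets} with parameter $t$; this is exactly how $\Delta_\rho$ was defined, so once it is spelled out, the theorem follows with no further work.
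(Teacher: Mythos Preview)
Your proposal is correct and is exactly the approach the paper takes: the paper states Theorem~\ref{thm:noncoherent-capability} simply ``as a consequence of Theorem~\ref{thm:noncoherent-magic},'' relying implicitly on Theorem~\ref{thm:magic-property-converse} and the general proposition on $\tau(\calC)$, and you have merely spelled out those implicit steps and the floor-unwinding carefully.
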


Similarly as in Section~\ref{ssec:coherent-main}, Theorem~\ref{thm:noncoherent-capability} shows that $\delta_{\rho}(\mathcal{C})$ is a fundamental parameter characterizing the error correction capability of a code in the current model. In contrast to Section~\ref{ssec:coherent-main}, however, the expression for $\Delta_\rho(X,Y)$ (and, consequently, $\delta_\rho(X,X')$) does not seem mathematically appealing since it involves a minimization. We now proceed to finding simpler expressions for $\Delta_\rho(X,Y)$ and $\delta_\rho(X,X')$.

The minimization in (\ref{eq:noncoherent-discrepancy-rank}) is a special case of a more general expression, which we give as follows.
For $X \in \Fq^{n \times m}$, $Y \in \Fq^{N \times m}$ and $L \geq \max\{n-\rho,\,N-\sigma\}$, let
\begin{equation}\nonumber
  \Delta_{\rho,\sigma,L}(X,Y) \triangleq \min_{\substack{A \in \Fq^{L \times n}, B \in \Fq^{L \times N}\colon \\ \rank A \geq n-\rho \\ \rank B \geq N - \sigma}}\, \rank(BY - AX).
\end{equation}

The quantity defined above is computed in the following lemma.

\begin{lemma}\label{lem:min-rank-quantity}
\begin{equation}\nonumber
  \Delta_{\rho,\sigma,L}(X,Y) = \Big[\max\{\rank X - \rho,\, \rank Y - \sigma\} - \dim (\linspan{X} \cap \linspan{Y})\Big]^+.
\end{equation}
\end{lemma}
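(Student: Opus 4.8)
The plan is to prove the claimed identity by establishing the two inequalities separately. Throughout, write $r_X = \rank X$, $r_Y = \rank Y$, $d_I = \dim(\linspan X \cap \linspan Y)$, and let $v = \big[\max\{r_X - \rho,\, r_Y - \sigma\} - d_I\big]^+$ denote the right-hand side. For the lower bound, fix any feasible pair $(A,B)$ and set $R = BY - AX$. Each row of $AX$ lies in $\linspan X$, and from $AX = BY - R$ each row also lies in $\linspan{BY} + \linspan R \subseteq \linspan Y + \linspan R$; hence $\linspan{AX} \subseteq \linspan X \cap (\linspan Y + \linspan R)$. Applying the elementary inequality $\dim(\calS \cap (\calT + \calR)) \leq \dim(\calS \cap \calT) + \dim \calR$ (two applications of (\ref{eq:subspace.dimension-equality})) gives $\rank(AX) \leq d_I + \rank R$. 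Since $\rank(AX) \geq \rank A + r_X - n \geq r_X - \rho$ by (\ref{eq:bound-rank-product}) and $\rank A \geq n - \rho$, we obtain $\rank R \geq r_X - \rho - d_I$; the symmetric argument with the roles of $(X,A,\rho)$ and $(Y,B,\sigma)$ interchanged gives $\rank R \geq r_Y - \sigma - d_I$. Together with $\rank R \geq 0$ this yields $\rank R \geq v$.

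For the upper bound I would first record a lifting fact: if $M \in \Fq^{L \times m}$ satisfies $\linspan M \subseteq \linspan X$ and $\rank M \geq [r_X - \rho]^+$, then there is $A \in \Fq^{L \times n}$ with $AX = M$ and $\rank A \geq n - \rho$. This is routine linear algebra: with a full-rank decomposition $X = PQ$ one writes $M = M'Q$, solves $AP = M'$ as $A = M'P^{+} + T$ for a left inverse $P^{+}$ of $P$ and a matrix $T$ whose rows lie freely in the left kernel of $P$ (which meets $\linspan{M'P^{+}}$ trivially), and chooses $T$ so that $\rank A = \rank M + \min\{L - \rank M,\, n - r_X\} \geq n - \rho$, using $L \geq n - \rho$. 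By this fact and its analogue for $Y$ (using $L \geq N - \sigma$), it suffices to exhibit $M_X, M_Y \in \Fq^{L \times m}$ with $\linspan{M_X} \subseteq \linspan X$, $\linspan{M_Y} \subseteq \linspan Y$, $\rank M_X \geq [r_X - \rho]^+$, $\rank M_Y \geq [r_Y - \sigma]^+$, and $\rank(M_Y - M_X) = v$.

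To construct them, assume without loss of generality $r_X - \rho \geq r_Y - \sigma$, and put $a = \max\{[r_X - \rho]^+, d_I\} \geq b = \max\{[r_Y - \sigma]^+, d_I\}$. In the main case $d_I \leq L$ one has $a \leq \min\{L, r_X\}$ and $b \leq \min\{L, r_Y\}$, so there are subspaces $\tilde U \subseteq \linspan X$, $\tilde V \subseteq \linspan Y$ with $\dim \tilde U = a$, $\dim \tilde V = b$, both containing $\linspan X \cap \linspan Y$ and with complementary parts chosen generically so that $\tilde U \cap \tilde V = \linspan X \cap \linspan Y$ (possible since $a - d_I \leq r_X - d_I$ and $b-d_I \leq r_Y - d_I$). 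Picking bases $\linspan X \cap \linspan Y = \linspan{p_1, \dots, p_{d_I}}$, $\tilde U = \linspan{p_1, \dots, p_{d_I}} \oplus \linspan{s_1, \dots, s_{a - d_I}}$, $\tilde V = \linspan{p_1, \dots, p_{d_I}} \oplus \linspan{t_1, \dots, t_{b - d_I}}$, let $M_X$ have rows $p_1, \dots, p_{d_I}, s_1, \dots, s_{a - d_I}$ followed by zeros, and let $M_Y$ have rows $p_1, \dots, p_{d_I}$, then $t_1, \dots, t_{b - d_I}$ in the same positions where $M_X$ carries $s_1, \dots, s_{b - d_I}$, then zeros; both are defined since $L \geq a$. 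Then $M_Y - M_X$ has row space $\linspan{t_j - s_j : j \leq b - d_I} + \linspan{s_j : b - d_I < j \leq a - d_I}$, which has dimension $a - d_I$ (the $s_j$ and $t_j$ are jointly independent because $\linspan{s_1,\dots,s_{a-d_I}} \cap \linspan{t_1,\dots,t_{b-d_I}} \subseteq \tilde U \cap \tilde V$ meets $\linspan{s_1,\dots,s_{a-d_I}}$ trivially); and $a - d_I = \big[\max\{r_X - \rho,\, r_Y - \sigma\} - d_I\big]^+ = v$, while $\rank M_X = a \geq [r_X - \rho]^+$ and $\rank M_Y = b \geq [r_Y - \sigma]^+$. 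The only remaining case, $d_I > L$, forces $\max\{r_X - \rho,\, r_Y - \sigma\} \leq n - \rho \leq L < d_I$ and hence $v = 0$; there one simply takes $M_X = M_Y$ to be any $L \times m$ matrix whose rows span a subspace of $\linspan X \cap \linspan Y$ of dimension $\max\{[r_X - \rho]^+, [r_Y - \sigma]^+\}$.

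The main obstacle is not a single deep step but the bookkeeping in the upper bound: checking that the lifting fact yields $\rank A \geq n - \rho$ (and its counterpart $\rank B \geq N - \sigma$) using only the hypothesis $L \geq \max\{n - \rho,\, N - \sigma\}$, and confirming that subspaces of the prescribed dimensions with intersection exactly $\linspan X \cap \linspan Y$ exist across all parameter ranges (which is where the degenerate case $d_I > L$ must be isolated). The lower bound, by contrast, is short and robust once the subspace inequality $\dim(\calS \cap (\calT + \calR)) \leq \dim(\calS \cap \calT) + \dim \calR$ is available.
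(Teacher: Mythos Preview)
Your proof is correct. The overall strategy matches the paper's: a lower bound via a subspace-dimension inequality combined with the rank bound~(\ref{eq:bound-rank-product}), and an upper bound obtained by aligning $AX$ and $BY$ on the common part $\linspan{X}\cap\linspan{Y}$ and truncating the remainder.

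The execution differs in organization. For the lower bound, the paper simply invokes the inequality $\rank(U-V)\geq\max\{\rank U,\rank V\}-\dim(\linspan{U}\cap\linspan{V})$ (cited from~\cite{Silva++2008}) applied to $AX$ and $BY$, whereas you derive the equivalent estimate directly from $\dim(\calS\cap(\calT+\calR))\leq\dim(\calS\cap\calT)+\dim\calR$; the two routes are interchangeable. For the upper bound, the paper writes down $A=A_1A_2$ and $B=B_1B_2$ explicitly, with $A_2,B_2$ bringing $X,Y$ into the block form $[W;\tilde X;0]$, $[W;\tilde Y;0]$ and $A_1,B_1$ truncating; this handles all parameter ranges uniformly. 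You instead factor the task into (i) a general lifting lemma (any target $M$ with $\linspan{M}\subseteq\linspan{X}$ and $\rank M\geq[r_X-\rho]^+$ arises as $AX$ with $\rank A\geq n-\rho$) and (ii) a direct construction of the targets $M_X,M_Y$. Your modular route is arguably more reusable, at the cost of having to isolate the degenerate case $d_I>L$; the paper's explicit factorization avoids that split. Two minor remarks: in your intersection argument, the condition $\tilde U\cap\tilde V=\linspan{X}\cap\linspan{Y}$ is automatic once $\tilde U\subseteq\linspan{X}$ and $\tilde V\subseteq\linspan{Y}$ both contain $\linspan{X}\cap\linspan{Y}$, so no ``generic'' choice is needed; and in the degenerate case you should also invoke $r_Y-\sigma\leq N-\sigma\leq L$, not only $r_X-\rho\leq n-\rho\leq L$.
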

\begin{proof}
See Appendix~\ref{sec:proof-lemma}.
\end{proof}

Note that $\Delta_{\rho,\sigma,L}(X,Y)$ is independent of $L$, for all valid $L$. Thus, we may drop the subscript and write simply $\Delta_{\rho,\sigma}(X,Y) \triangleq \Delta_{\rho,\sigma,L}(X,Y)$.

We can now provide a simpler expression for $\Delta_{\rho}(X,Y)$.

\begin{theorem}\label{thm:noncoherent-discrepancy-expression}
  \begin{equation}\nonumber
    \Delta_{\rho}(X,Y) = \max\{\rank X - \rho,\, \rank Y\} - \dim (\linspan{X} \cap \linspan{Y}).
 \end{equation}
\end{theorem}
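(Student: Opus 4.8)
The plan is to derive the formula by specializing the general minimum-rank quantity $\Delta_{\rho,\sigma,L}(X,Y)$ of Lemma~\ref{lem:min-rank-quantity} to the parameters that reproduce the minimization (\ref{eq:noncoherent-discrepancy-rank}) defining $\Delta_\rho(X,Y)$, and then checking that the outer $[\cdot]^+$ is vacuous.

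Concretely, I would take $\sigma = 0$ and $L = N$ in Lemma~\ref{lem:min-rank-quantity}; this is within the allowed range, since the problem (\ref{eq:noncoherent-discrepancy-rank}) is only meaningful when $N \geq n - \rho$, so $\max\{n-\rho,\,N\} = N$. With these parameters $B$ is an $N \times N$ matrix and the constraint $\rank B \geq N$ forces $B$ invertible; as left multiplication by an invertible matrix preserves rank, $\rank(BY - AX) = \rank(Y - B^{-1}AX)$. Substituting $A' = B^{-1}A$ sets up a bijection between feasible pairs $(A,B)$ (with $\rank A \geq n-\rho$ and $B$ invertible) and pairs $(A',B)$ (with $\rank A' = \rank A \geq n-\rho$ and $B$ invertible), under which the objective depends on $A'$ alone; since the set of invertible $N \times N$ matrices is nonempty, the minimization over $B$ does nothing, and we obtain
\[
  \Delta_{\rho,0,N}(X,Y) \;=\; \min_{\substack{A' \in \Fq^{N \times n}:\\ \rank A' \geq n-\rho}} \rank(Y - A'X) \;=\; \Delta_\rho(X,Y).
\]
Plugging $\sigma = 0$ into the formula of Lemma~\ref{lem:min-rank-quantity} then yields $\Delta_\rho(X,Y) = \big[\max\{\rank X - \rho,\, \rank Y\} - \dim(\linspan{X}\cap\linspan{Y})\big]^+$. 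Finally, since $\linspan{X}\cap\linspan{Y} \subseteq \linspan{Y}$ gives $\dim(\linspan{X}\cap\linspan{Y}) \leq \rank Y \leq \max\{\rank X - \rho,\,\rank Y\}$, the bracketed quantity is already nonnegative, so the $[\cdot]^+$ can be dropped and the claimed identity follows.

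The only delicate point is the reduction of the two-matrix minimization defining $\Delta_{\rho,0,N}$ to the single-matrix minimization (\ref{eq:noncoherent-discrepancy-rank}): one must check that $\rank B \geq N$ genuinely forces $B$ to be invertible for the chosen $L = N$, and that $A' = B^{-1}A$ really does range over exactly $\{A' : \rank A' \geq n - \rho\}$. Everything else is an appeal to Lemma~\ref{lem:min-rank-quantity} (whose own proof is deferred to the appendix) together with a one-line dimension inequality, so I expect essentially all the real work of this theorem to be carried by that lemma rather than by the argument above.
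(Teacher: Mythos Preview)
Your proof is correct and follows essentially the same approach as the paper, which proves the theorem in one line by noting that $\Delta_\rho(X,Y) = \Delta_{\rho,0}(X,Y)$ and invoking Lemma~\ref{lem:min-rank-quantity}. You have simply filled in the details the paper leaves implicit: the reduction of the two-matrix minimization to (\ref{eq:noncoherent-discrepancy-rank}) via invertibility of $B$ when $L=N$ and $\sigma=0$, and the observation that the $[\cdot]^+$ is vacuous because $\dim(\linspan{X}\cap\linspan{Y}) \leq \rank Y$.
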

\begin{proof}
  This follows immediately from Lemma~\ref{lem:min-rank-quantity} by noticing that $\Delta_{\rho}(X,Y) = \Delta_{\rho,0}(X,Y)$.
\end{proof}

From Theorem~\ref{thm:noncoherent-discrepancy-expression}, we observe that $\Delta_{\rho}(X,Y)$ depends on the matrices $X$ and $Y$ only through their row spaces, i.e., only the transmitted and received row spaces have a role in the decoding. Put another way, we may say that the channel really accepts an input subspace $\linspan{X}$ and delivers an output subspace $\linspan{Y}$. Thus, all the communication is made via subspace selection. This observation provides a fundamental justification for the approach of \cite{Kotter.Kschischang2008}.

At this point, it is useful to introduce the following definition.

\begin{definition}\label{def:modif-subspace-distance}
  The \emph{injection distance} between subspaces $\mathcal{U}$ and $\mathcal{V}$ in $\calP_q(m)$ is defined as
\begin{align}
  \di(\mathcal{U},\mathcal{V})
  &\triangleq \max\{\dim \mathcal{U},\, \dim \mathcal{V}\} - \dim(\mathcal{U} \cap \mathcal{V})  \label{eq:injection-distance-definition} \\
  &= \dim(\mathcal{U} + \mathcal{V}) - \min\{\dim \mathcal{U},\, \dim \mathcal{V}\}. \nonumber
\end{align}
\end{definition}

The injection distance can be interpreted as measuring the number of error packets that an adversary needs to inject in order to transform an input subspace $\linspan{X}$ into an output subspace $\linspan{Y}$. This can be clearly seen from the fact that $\di(\linspan{X},\linspan{Y}) = \Delta_{0}(X,Y)$. Thus, the injection distance is essentially equal to the discrepancy $\Delta_{\rho}(X,Y)$ when the channel is influenced only by the adversary, i.e., when the non-adversarial aspect of the channel (the column-rank deficiency of $A$) is removed from the problem. Note that, in this case
, the decoder (\ref{eq:noncoherent-decoding-rule}) becomes precisely a minimum-injection-distance decoder.

\begin{proposition}\label{prop:injection-metric}
  The injection distance is a metric.
\end{proposition}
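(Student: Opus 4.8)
The plan is to verify the three defining properties of a metric in turn: symmetry, positivity (including the identity of indiscernibles), and the triangle inequality. Symmetry is immediate from either expression in (\ref{eq:injection-distance-definition}), since $\max$, $\min$, and the dimensions of $\calU\cap\calV$ and $\calU+\calV$ are all symmetric in the pair $(\calU,\calV)$. For positivity, note that $\calU\cap\calV\subseteq\calU$ and $\calU\cap\calV\subseteq\calV$, so $\dim(\calU\cap\calV)\le\min\{\dim\calU,\dim\calV\}\le\max\{\dim\calU,\dim\calV\}$; hence $\di(\calU,\calV)\ge0$, with equality exactly when $\dim(\calU\cap\calV)=\dim\calU=\dim\calV$, and since a subspace contained in another of the same (finite) dimension coincides with it, this forces $\calU=\calU\cap\calV=\calV$. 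Conversely $\di(\calU,\calU)=0$ is clear.

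The substance of the proof is the triangle inequality $\di(\calU,\calW)\le\di(\calU,\calV)+\di(\calV,\calW)$, which I would derive from two elementary ingredients. First, a dimension-counting bound: since $\calU\cap\calV$ and $\calV\cap\calW$ are both subspaces of $\calV$, (\ref{eq:subspace.dimension-equality}) gives $\dim\big((\calU\cap\calV)\cap(\calV\cap\calW)\big)\ge\dim(\calU\cap\calV)+\dim(\calV\cap\calW)-\dim\calV$, and because $(\calU\cap\calV)\cap(\calV\cap\calW)=\calU\cap\calV\cap\calW\subseteq\calU\cap\calW$, we obtain
\[
  \dim(\calU\cap\calW)\ \ge\ \dim(\calU\cap\calV)+\dim(\calV\cap\calW)-\dim\calV .
\]
Second, a purely numerical inequality: for all integers $a,b,c$ one has $\max\{a,c\}+b\le\max\{a,b\}+\max\{b,c\}$, checked by a short case analysis according to whether $b$ is the largest, the smallest, or the middle of the three. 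Writing $a=\dim\calU$, $b=\dim\calV$, $c=\dim\calW$ and substituting the displayed bound into the first expression of (\ref{eq:injection-distance-definition}) for $\di(\calU,\calW)$, the desired inequality collapses, after cancelling the common $-\dim(\calU\cap\calV)-\dim(\calV\cap\calW)$ terms, to exactly this numerical one.

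Alternatively, and perhaps more transparently, one can reduce the whole statement to the known fact that the subspace distance $\ds(\calU,\calV)=\dim(\calU+\calV)-\dim(\calU\cap\calV)$ of \cite{Kotter.Kschischang2008} is a metric, via the identity $\di(\calU,\calV)=\tfrac12\big(\ds(\calU,\calV)+|\dim\calU-\dim\calV|\big)$, which follows from $\max\{a,b\}=\tfrac12(a+b+|a-b|)$ together with $\dim(\calU+\calV)+\dim(\calU\cap\calV)=\dim\calU+\dim\calV$. Then $\di$ is a positive average of the metric $\ds$ and the pseudometric $(\calU,\calV)\mapsto|\dim\calU-\dim\calV|$; symmetry and the triangle inequality pass through the average, and $\di(\calU,\calV)=0$ forces $\ds(\calU,\calV)=0$, hence $\calU=\calV$. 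I expect the only mild obstacle to be the bookkeeping in the direct route — the case analysis for $\max\{a,c\}+b\le\max\{a,b\}+\max\{b,c\}$, and confirming that the triple-intersection bound is fed into the correct expression for $\di$ — while the subspace-distance route sidesteps this entirely at the cost of invoking the metric property of $\ds$.
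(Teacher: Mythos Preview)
Your proposal is correct, and your second (``alternative'') route is exactly the argument the paper gives: it postpones the proof until the identity $\di(\calU,\calV)=\tfrac12\ds(\calU,\calV)+\tfrac12|\dim\calU-\dim\calV|$ has been established (their equation (\ref{eq:injection-and-subspace-distance})), and then observes in one line that $\ds$ is a metric on $\calP_q(m)$ and $|\cdot|$ a norm on $\mathbb{R}$, so the average is a metric.

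Your first route---the direct verification via the triple-intersection bound $\dim(\calU\cap\calW)\ge\dim(\calU\cap\calV)+\dim(\calV\cap\calW)-\dim\calV$ together with the elementary inequality $\max\{a,c\}+b\le\max\{a,b\}+\max\{b,c\}$---is a genuinely different and fully self-contained argument. It avoids invoking the metric property of $\ds$ from \cite{Kotter.Kschischang2008}, at the price of the small case analysis you flag. Both routes are sound; the paper simply opts for the shorter one that leverages the already-known subspace metric.
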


We delay the proof of Proposition~\ref{prop:injection-metric} until Section~\ref{ssec:noncoherent-comparison}.

We can now use the definition of the injection distance to simplify the expression for the $\Delta$-distance.

\begin{proposition}\label{prop:noncoherent-delta-expression}
  \begin{equation}\nonumber
  \delta_{\rho}(X,X') = [\di(\linspan{X},\linspan{X'}) - \rho]^+.
  \end{equation}
\end{proposition}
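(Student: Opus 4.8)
The plan is to start from the expression for $\delta_\rho(X,X')$ in terms of a double minimization over $A,A'$ given in~(\ref{eq:noncoherent-delta-expression-rank}), namely
$$\delta_\rho(X,X') = \min_{\substack{A,A'\colon \rank A \geq n-\rho,\ \rank A' \geq n-\rho}} \rank(A'X' - AX),$$
and evaluate this minimum. First I would observe that the quantity $\delta_\rho(X,X')$ is exactly $\Delta_{\rho,\rho}(X,X')$ in the notation of Lemma~\ref{lem:min-rank-quantity}: indeed, writing $A'X' - AX$ and comparing with $BY - AX$ under the substitutions $Y \leftarrow X'$, $B \leftarrow A'$, $\sigma \leftarrow \rho$, $N \leftarrow n$, we see the two minimizations coincide (any valid $L \geq \max\{n-\rho, n-\rho\} = n-\rho$ works, e.g.\ $L = N$). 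Then Lemma~\ref{lem:min-rank-quantity} immediately gives
$$\delta_\rho(X,X') = \Big[\max\{\rank X - \rho,\ \rank X' - \rho\} - \dim(\linspan{X} \cap \linspan{X'})\Big]^+ = \Big[\max\{\rank X,\ \rank X'\} - \rho - \dim(\linspan{X} \cap \linspan{X'})\Big]^+.$$

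The final step is to recognize the bracketed expression in terms of the injection distance: by Definition~\ref{def:modif-subspace-distance}, $\di(\linspan{X},\linspan{X'}) = \max\{\dim\linspan{X},\dim\linspan{X'}\} - \dim(\linspan{X}\cap\linspan{X'}) = \max\{\rank X,\rank X'\} - \dim(\linspan{X}\cap\linspan{X'})$, since $\dim\linspan{X} = \rank X$. Hence $\max\{\rank X,\rank X'\} - \rho - \dim(\linspan{X}\cap\linspan{X'}) = \di(\linspan{X},\linspan{X'}) - \rho$, and therefore $\delta_\rho(X,X') = [\di(\linspan{X},\linspan{X'}) - \rho]^+$, as claimed.

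The only subtlety—really the one place where care is needed rather than a genuine obstacle—is verifying that $\delta_\rho$ is indeed an instance of $\Delta_{\rho,\sigma,L}$ with matching parameters, so that Lemma~\ref{lem:min-rank-quantity} applies verbatim. This amounts to checking that the constraint $\rank A' \geq n-\rho$ in~(\ref{eq:noncoherent-delta-expression-rank}) plays the role of $\rank B \geq N - \sigma$ with $N = n$ and $\sigma = \rho$, and that the dimension constraint $L \geq \max\{n-\rho, N-\sigma\}$ is automatically satisfiable. Given the identification, everything else is a direct substitution, so I would present this as a short two-paragraph proof citing Lemma~\ref{lem:min-rank-quantity} and Definition~\ref{def:modif-subspace-distance}.
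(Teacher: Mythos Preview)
Your proposal is correct and follows exactly the paper's own route: the paper's proof is the one-line observation that $\delta_{\rho}(X,X') = \Delta_{\rho,\rho}(X,X')$, after which Lemma~\ref{lem:min-rank-quantity} and Definition~\ref{def:modif-subspace-distance} give the result directly. Your write-up simply makes explicit the parameter matching ($Y \leftarrow X'$, $N \leftarrow n$, $\sigma \leftarrow \rho$) and the algebraic simplification $\max\{\rank X - \rho,\ \rank X' - \rho\} = \max\{\rank X,\ \rank X'\} - \rho$, which the paper leaves implicit.
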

\begin{proof}
This follows immediately after realizing that $\delta_{\rho}(X,X') = \Delta_{\rho,\rho}(X,X')$.
\end{proof}

From Proposition~\ref{prop:noncoherent-delta-expression}, it is clear that $\delta_\rho(\cdot,\cdot)$ is a metric if and only if $\rho = 0$ (in which case it is precisely the injection metric). If $\rho > 0$, then $\delta_\rho(\cdot,\cdot)$ does not satisfy the triangle inequality.

It is worth noticing that $\delta_{\rho}(X,X') = 0$ for any two matrices $X$ and $X'$ that share the same row space. Thus, any reasonable code $\mathcal{C}$ should avoid this situation.

For $\mathcal{C} \subseteq \Fq^{n \times m}$, let
\begin{equation}\nonumber
  \linspan{\mathcal{C}} = \{\linspan{X} \colon X \in \mathcal{C}\}
\end{equation}
be the subspace code (i.e., a collection of subspaces) consisting of the row spaces of all matrices in $\calC$.
The following corollary of Proposition~\ref{prop:noncoherent-delta-expression} is immediate.

\begin{corollary}\label{cor:noncoherent-delta-min-expression}
  Suppose $\mathcal{C}$ is such that $|\mathcal{C}| = |\linspan{\mathcal{C}}|$, i.e., no two codewords of $\mathcal{C}$ have the same row space. Then
\begin{equation}\nonumber
  \delta_{\rho}(\mathcal{C}) = [\di(\linspan{\mathcal{C}}) - \rho]^+.
\end{equation}
\end{corollary}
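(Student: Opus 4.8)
The plan is to simply unwind the definition of the minimum $\Delta$-distance and apply Proposition~\ref{prop:noncoherent-delta-expression} term by term. By the convention fixed in Section~\ref{sec:preliminaries}, $\delta_{\rho}(\calC) = \min\{\delta_{\rho}(X,X') \colon X,X' \in \calC,\ X \neq X'\}$. The first step is to substitute Proposition~\ref{prop:noncoherent-delta-expression} into each term, yielding
\[
  \delta_{\rho}(\calC) \;=\; \min_{\substack{X,X' \in \calC \\ X \neq X'}} \bigl[\di(\linspan{X},\linspan{X'}) - \rho\bigr]^+ .
\]

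Next I would invoke the hypothesis $|\calC| = |\linspan{\calC}|$, which says precisely that the map $X \mapsto \linspan{X}$ is injective on $\calC$. Consequently, as $(X,X')$ ranges over all ordered pairs of distinct codewords, $(\linspan{X},\linspan{X'})$ ranges over all ordered pairs of distinct subspaces in $\linspan{\calC}$ (in particular $\di(\linspan{X},\linspan{X'}) > 0$ for every such pair, since $\di(\cdot,\cdot)$ is a metric by Proposition~\ref{prop:injection-metric}, so there is no collapse). Hence the index set of the minimization above is, via this bijection, exactly the index set that defines $\di(\linspan{\calC})$.

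Finally, the scalar map $t \mapsto [t-\rho]^+$ is monotone nondecreasing on $\mathbb{Z}$, and a monotone nondecreasing map commutes with the minimum over a finite nonempty set; therefore
\[
  \min_{\substack{X,X' \in \calC \\ X \neq X'}} \bigl[\di(\linspan{X},\linspan{X'}) - \rho\bigr]^+
  \;=\; \Bigl[\min_{\substack{\calU,\calV \in \linspan{\calC} \\ \calU \neq \calV}} \di(\calU,\calV) - \rho\Bigr]^+
  \;=\; [\di(\linspan{\calC}) - \rho]^+ ,
\]
which is the claimed identity. There is no genuine obstacle here: the only two points needing a word of justification are the use of injectivity of $X \mapsto \linspan{X}$ to identify the two index sets, and the pulling of the clipping $[\,\cdot-\rho\,]^+$ outside the minimum, both of which are elementary. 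This is why the corollary is stated as immediate.
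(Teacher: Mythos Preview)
Your argument is correct and is precisely the unfolding of what the paper means by ``immediate'': apply Proposition~\ref{prop:noncoherent-delta-expression} to each pair, use the hypothesis to identify the index sets, and pull the monotone map $t \mapsto [t-\rho]^+$ through the minimum. There is nothing to add.
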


Using Corollary~\ref{cor:noncoherent-delta-min-expression}, we can restate Theorem~\ref{thm:noncoherent-capability}
more simply in terms of the injection distance.

\begin{theorem}\label{thm:noncoherent-complete-capability}
  A code $\mathcal{C}$ is guaranteed to correct $t$ packet errors, under rank deficiency $\rho$, if and only if $\di(\linspan{\mathcal{C}}) > 2t + \rho$.
\end{theorem}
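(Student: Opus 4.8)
The plan is to read off Theorem~\ref{thm:noncoherent-complete-capability} from Theorem~\ref{thm:noncoherent-capability} by substituting the closed-form value of $\delta_\rho(\mathcal{C})$ furnished by Corollary~\ref{cor:noncoherent-delta-min-expression}. Theorem~\ref{thm:noncoherent-capability} already tells us that $\mathcal{C}$ is guaranteed to correct any $t$ packet errors (under rank deficiency $\rho$) if and only if $\delta_\rho(\mathcal{C}) > 2t$, so the only task is to re-express the quantity $\delta_\rho(\mathcal{C})$ in terms of the injection distance of the subspace code $\linspan{\mathcal{C}}$.

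Corollary~\ref{cor:noncoherent-delta-min-expression} requires that no two codewords of $\mathcal{C}$ share a row space, so I would first dispatch the degenerate case. If distinct $X,X' \in \mathcal{C}$ have $\linspan{X} = \linspan{X'}$, then Proposition~\ref{prop:noncoherent-delta-expression} gives $\delta_\rho(X,X') = [\di(\linspan{X},\linspan{X'}) - \rho]^+ = [\,0 - \rho\,]^+ = 0$, hence $\delta_\rho(\mathcal{C}) = 0$ and, by Theorem~\ref{thm:noncoherent-capability}, $\mathcal{C}$ is not $t$-discrepancy-correcting for any $t \in \mathbb{N}$; this is consistent with the remark preceding the theorem that a reasonable code must avoid this situation, and so we may assume $|\mathcal{C}| = |\linspan{\mathcal{C}}|$. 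Corollary~\ref{cor:noncoherent-delta-min-expression} then applies and yields $\delta_\rho(\mathcal{C}) = [\di(\linspan{\mathcal{C}}) - \rho]^+$.

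It remains only to unwind the $[\,\cdot\,]^+$. Since $t \geq 0$, we have $[\di(\linspan{\mathcal{C}}) - \rho]^+ > 2t$ if and only if $\di(\linspan{\mathcal{C}}) - \rho > 2t$, i.e., if and only if $\di(\linspan{\mathcal{C}}) > 2t + \rho$; chaining this with Theorem~\ref{thm:noncoherent-capability} establishes both directions of the equivalence. I do not anticipate a real obstacle: the genuine content --- that $\Delta_\rho$ is {\magic} (Theorem~\ref{thm:noncoherent-magic}) and the minimum-rank evaluation of Lemma~\ref{lem:min-rank-quantity} --- is already in place, and what is left is elementary manipulation of these identities together with the handling of the equal-row-space edge case.
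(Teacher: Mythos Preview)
Your proposal is correct and matches the paper's approach: the paper does not give an explicit proof but simply states that the theorem is a restatement of Theorem~\ref{thm:noncoherent-capability} via Corollary~\ref{cor:noncoherent-delta-min-expression}. Your treatment is in fact more careful, since you explicitly dispatch the degenerate equal-row-space case and unwind the $[\,\cdot\,]^+$, whereas the paper leaves these implicit.
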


Note that, due to equality in Corollary~\ref{cor:noncoherent-delta-min-expression}, a converse is indeed possible in Theorem~\ref{thm:noncoherent-complete-capability} (contrast with Proposition~\ref{prop:coherent-guarantee-rank-metric} for the coherent case).

Theorem~\ref{thm:noncoherent-complete-capability} shows that $\di(\linspan{\calC})$ is a fundamental parameter characterizing the \emph{complete} correction capability (i.e., error correction capability and ``rank-deficiency correction'' capability) of a code in our noncoherent model. Put another way, we may say that a code $\mathcal{C}$ is good for the model of this subsection if and only if its subspace version $\linspan{\mathcal{C}}$ is a good code in the injection metric.

\subsection{Comparison with the Metric of K\"otter and Kschischang}
\label{ssec:noncoherent-comparison}

Let $\Omega \subseteq \calP_q(m)$ be a subspace code whose elements have maximum dimension $n$. In \cite{Kotter.Kschischang2008}, the network is modeled as an operator channel that takes in a subspace $\mathcal{V} \in \calP_q(m)$ and puts out a possibly different subspace $\mathcal{U} \in \calP_q(m)$. The kind of disturbance that the channel applies to $\mathcal{V}$ is captured by the notions of ``insertions'' and ``deletions'' of dimensions (represented mathematically using operators), and the degree of such a dissimilarity is captured by the subspace distance
\begin{align}
\ds(\mathcal{V},\mathcal{U})
&\triangleq \dim(\calV + \calU) - \dim(\calV \cap \calU) \nonumber \\
&= \dim \calV + \dim \calU - 2 \dim(\calV \cap \calU) \label{eq:subspace-distance-definition} \\
&= \dim(\calV + \calU) - \dim \calV - \dim \calU. \nonumber
\end{align}
The transmitter selects some $\mathcal{V} \in \Omega$ and transmits $\mathcal{V}$ over the channel. The receiver receives some subspace $\mathcal{U}$ and, using a \emph{minimum subspace distance decoder}, decides that the subspace $\hat{\mathcal{V}} \subseteq \Omega$ was sent, where
\begin{equation}\label{eq:subspace-decoding-rule}
  \hat{\mathcal{V}} = \argmin_{\mathcal{V} \in \Omega}\, \ds(\mathcal{V},\mathcal{U}).
\end{equation}
This decoder is guaranteed to correct all disturbances applied by the channel if $\ds(\mathcal{V},\mathcal{U}) < \ds(\Omega)/2$, where $\ds(\Omega)$ is the minimum subspace distance between all pairs of distinct codewords of $\Omega$.

First, let us point out that this setup is indeed the same as that of Section~\ref{ssec:noncoherent-main} if we set $\mathcal{V} = \linspan{X}$, $\mathcal{U} = \linspan{Y}$ and $\Omega = \linspan{\mathcal{C}}$, where $\mathcal{C}$ is such that $|\mathcal{C}| = |\linspan{\mathcal{C}}|$. Also, any disturbance applied by an operator channel can be realized by a matrix model, and vice-versa. Thus, the difference between the approach of this section and that of \cite{Kotter.Kschischang2008} lies in the choice of the decoder.

Indeed, by using Theorem~\ref{thm:noncoherent-discrepancy-expression} and the definition of subspace distance, we get the following relationship:

\begin{proposition}\label{prop:Delta-and-subspace-distance}
  \begin{equation}\nonumber
    \Delta_{\rho}(X,Y) = \frac{1}{2} \ds(\linspan{X},\linspan{Y}) - \frac{1}{2} \rho + \frac{1}{2} | \rank X - \rank Y - \rho |.
  \end{equation}
\end{proposition}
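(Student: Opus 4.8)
The plan is to reduce everything to the scalar identity $\max\{a,b\} = \tfrac{1}{2}(a+b) + \tfrac{1}{2}|a-b|$, valid for all real $a,b$. First I would invoke Theorem~\ref{thm:noncoherent-discrepancy-expression} to write
\begin{equation}\nonumber
  \Delta_{\rho}(X,Y) = \max\{\rank X - \rho,\, \rank Y\} - \dim (\linspan{X} \cap \linspan{Y}),
\end{equation}
and separately expand the subspace distance from (\ref{eq:subspace-distance-definition}) as
\begin{equation}\nonumber
  \ds(\linspan{X},\linspan{Y}) = \rank X + \rank Y - 2 \dim(\linspan{X} \cap \linspan{Y}),
\end{equation}
using $\dim\linspan{X} = \rank X$ and $\dim\linspan{Y} = \rank Y$.

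Next I would substitute both expressions into the claimed identity. The term $-\dim(\linspan{X}\cap\linspan{Y})$ appears on the left-hand side and, via $\tfrac12\ds(\linspan X,\linspan Y)$, also on the right-hand side, so it cancels, and the claim reduces to the purely scalar statement
\begin{equation}\nonumber
  \max\{\rank X - \rho,\, \rank Y\} = \tfrac{1}{2}(\rank X + \rank Y) - \tfrac{1}{2}\rho + \tfrac{1}{2}\,|\rank X - \rank Y - \rho|.
\end{equation}
This is precisely the identity $\max\{a,b\} = \tfrac12(a+b) + \tfrac12|a-b|$ with the choice $a = \rank X - \rho$ and $b = \rank Y$, since then $a+b = \rank X + \rank Y - \rho$ and $a - b = \rank X - \rank Y - \rho$. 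Verifying the scalar identity itself is the one-line check that $\max\{a,b\} - b = [a-b]^+ = \tfrac12\big((a-b) + |a-b|\big)$.

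There is essentially no obstacle here: the proposition is a bookkeeping consequence of the closed form already established in Theorem~\ref{thm:noncoherent-discrepancy-expression} together with an elementary max/absolute-value identity. The only point requiring a moment's care is keeping the roles of $\rank X$ versus $\rank Y$ straight, since the expression is deliberately asymmetric in them (the $-\rho$ is attached to $\rank X$, matching the asymmetric constraint $\rank A \geq n - \rho$ in the definition of $\Delta_\rho$); writing out $a$ and $b$ explicitly as above removes any ambiguity.
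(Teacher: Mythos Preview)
Your proof is correct and follows exactly the route the paper indicates: the paper does not spell out a formal proof but simply states that the relationship follows ``by using Theorem~\ref{thm:noncoherent-discrepancy-expression} and the definition of subspace distance,'' which is precisely what you do, with the scalar identity $\max\{a,b\}=\tfrac12(a+b)+\tfrac12|a-b|$ making the final link explicit.
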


Thus, we can see that when the matrices in $\mathcal{C}$ do not all have the same rank (i.e., $\Omega$ is a \emph{non-constant-dimension code}), then the decoding rules (\ref{eq:noncoherent-decoding-rule}) and (\ref{eq:subspace-decoding-rule}) may produce different decisions.

Using $\rho=0$ in the above proposition (or simply using (\ref{eq:injection-distance-definition}) and (\ref{eq:subspace-distance-definition})) gives us another formula for the injection distance:
\begin{equation}\label{eq:injection-and-subspace-distance}
  \di(\calV,\calU) = \frac{1}{2} \ds(\mathcal{V},\mathcal{U}) + \frac{1}{2} | \dim \mathcal{V} - \dim \mathcal{U} |.
\end{equation}

We can now prove a result that was postponed in the previous section.

\begin{theorem}
  The injection distance is a metric.
\end{theorem}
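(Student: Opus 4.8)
The plan is to read off the result from the identity (\ref{eq:injection-and-subspace-distance}), namely
\begin{equation}\nonumber
  \di(\calV,\calU) = \tfrac{1}{2}\ds(\calV,\calU) + \tfrac{1}{2}\,\bigl|\dim\calV - \dim\calU\bigr|,
\end{equation}
combined with the fact, established in \cite{Kotter.Kschischang2008}, that the subspace distance $\ds$ is a metric on $\calP_q(m)$. From the right-hand side, symmetry and non-negativity of $\di$ are immediate, since each of the two summands is symmetric and non-negative. For the identity of indiscernibles: $\di(\calV,\calV) = 0$ is clear, and conversely $\di(\calV,\calU) = 0$ forces $\ds(\calV,\calU) = 0$, which already gives $\calV = \calU$ because $\ds$ is a metric.

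It then remains to verify the triangle inequality, and here the idea is to bound the two summands separately. For the first, one invokes the triangle inequality for $\ds$. For the second, one observes that $(\calV,\calU) \mapsto |\dim\calV - \dim\calU|$ is the pullback of the standard metric on $\mathbb{Z}$ along $\calV \mapsto \dim\calV$, hence a pseudometric, so $|\dim\calV - \dim\calW| \le |\dim\calV - \dim\calU| + |\dim\calU - \dim\calW|$. Adding the two inequalities and halving yields $\di(\calV,\calW) \le \di(\calV,\calU) + \di(\calU,\calW)$.

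There is essentially no obstacle here: all the content has already been packed into the formula (\ref{eq:injection-and-subspace-distance}) — derived above from Theorem~\ref{thm:noncoherent-discrepancy-expression} — and into the known metric property of $\ds$; what remains is the elementary observation that a sum of a metric and a pseudometric is again a metric. (One could instead argue directly from the definition (\ref{eq:injection-distance-definition}), $\di(\calU,\calV) = \max\{\dim\calU,\dim\calV\} - \dim(\calU\cap\calV)$, using (\ref{eq:subspace.dimension-equality}) and a short case analysis on the ordering of $\dim\calU,\dim\calV,\dim\calW$, but routing through $\ds$ is cleaner and is the route I would take.)
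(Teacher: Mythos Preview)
Your proof is correct and follows essentially the same route as the paper: both derive the metric property of $\di$ from the identity (\ref{eq:injection-and-subspace-distance}) by noting that $\ds$ is a metric and that $(\calV,\calU)\mapsto|\dim\calV-\dim\calU|$ is a pseudometric (the paper phrases this as ``$|\cdot|$ is a norm on $\mathbb{R}$''), so their sum is a metric. You have simply spelled out the individual metric axioms where the paper leaves them implicit.
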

\begin{proof}
  Since $\ds(\cdot,\cdot)$ is a metric on $\calP_q(m)$ and $|\cdot|$ is a norm on $\mathbb{R}$, it follows from (\ref{eq:injection-and-subspace-distance}) that $\di(\cdot,\cdot)$ is also a metric on $\calP_q(m)$.
\end{proof}

We now examine in more detail an example situation where the minimum-subspace-distance decoder and the minimum-discrepancy decoder produce different decisions.

\begin{example}\label{ex:decoding-comparison}
For simplicity, assume $\rho = 0$. Consider a subspace code that contains two codewords
$\mathcal{V}_1 = \linspan{X_1}$ and $\mathcal{V}_2 = \linspan{X_2}$
such that $\gamma \triangleq \dim \mathcal{V}_2 - \dim \mathcal{V}_1$ satisfies $d/3 < \gamma < d/2$, where $d \triangleq \ds(\mathcal{V}_1,\mathcal{V}_2)$.

Suppose the received subspace $\mathcal{U} = \linspan{Y}$ is such that
$\mathcal{V}_1 \subseteq \mathcal{U} \subseteq \mathcal{V}_1 + \mathcal{V}_2$ and $\dim \mathcal{U} = \dim \mathcal{V}_1 + \gamma = \dim \mathcal{V}_2$,
as illustrated in Fig.~\ref{fig:subspaces}.
Then $\ds(\mathcal{V}_1,\mathcal{U}) = \gamma$ and $\ds(\mathcal{V}_2,\mathcal{U}) = d- \gamma$, while Proposition~(\ref{prop:Delta-and-subspace-distance}) gives $\Delta_{\rho}(X_1,Y) = \gamma$ and $\Delta_{\rho}(X_2,Y) = (d-\gamma)/2 \triangleq \epsilon$. Since, by assumption, $d-\gamma > \gamma$ and $\epsilon < \gamma$, it follows that $\ds(\mathcal{V}_1,\mathcal{U}) < \ds(\mathcal{V}_2,\mathcal{U})$ but $\Delta_{\rho}(X_1,Y) > \Delta_{\rho}(X_2,Y)$, i.e., the decoders (\ref{eq:subspace-decoding-rule}) and (\ref{eq:noncoherent-decoding-rule}) will produce different decisions.

This situation can be intuitively explained as follows. The decoder (\ref{eq:subspace-decoding-rule}) favors the subspace $\mathcal{V}_1$, which is closer in subspace distance to $\mathcal{U}$ than $\mathcal{V}_2$. However, since $\mathcal{V}_1$ is low-dimensional, $\mathcal{U}$ can only be produced from $\mathcal{V}_1$ by the \emph{insertion} of $\gamma$ dimensions. The decoder (\ref{eq:noncoherent-decoding-rule}), on the other hand, favors $\mathcal{V}_2$, which, although farther in subspace distance, can produce $\mathcal{U}$ after the \emph{replacement} of $\epsilon < \gamma$ dimensions. Since one packet error must occur for each inserted or replaced dimension, we conclude that the decoder (\ref{eq:noncoherent-decoding-rule}) finds the solution that minimizes the number of packet errors observed.
\end{example}

\begin{figure}
  \centering
  \includegraphics[scale=1.0]{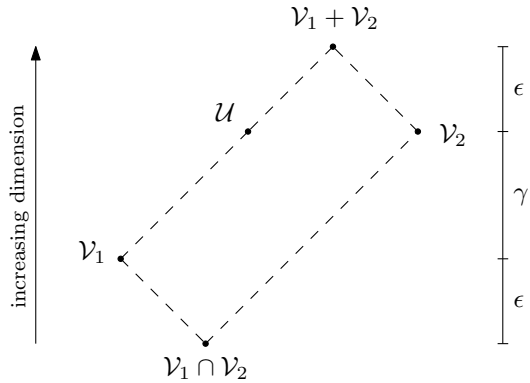}\\
  \caption{Lattice of subspaces in Example~\ref{ex:decoding-comparison}. Two spaces are joined with a dashed line if one is a subspace of the other.}\label{fig:subspaces}
\end{figure}

\begin{remark}
The subspace metric of \cite{Kotter.Kschischang2008} treats insertions and deletions of dimensions (called in \cite{Kotter.Kschischang2008} ``errors'' and ``erasures'', respectively) symmetrically. However, depending upon the position of the adversary in the network (namely, if there is a source-destination min-cut between the adversary and the destination) then a single error packet may cause the replacement of a dimension (i.e., a simultaneous ``error'' and ``erasure'' in the terminology of \cite{Kotter.Kschischang2008}). The injection distance, which is designed to ``explain'' a received subspace with as few error-packet injections as possible, properly accounts for this phenomenon, and hence the corresponding decoder produces a different result than a minimum subspace distance decoder. If it were possible to restrict the adversary so that each error-packet injection would only cause either an \emph{insertion} or a \emph{deletion} of a dimension (but not both), then the subspace distance of \cite{Kotter.Kschischang2008} would indeed be appropriate. However, this is not the model considered here.
\end{remark}

Let us now discuss an important fact about the subspace distance for general subspace codes (assuming for simplicity that $\rho=0$). The packet error correction capability of a minimum-subspace-distance decoder, $t_S$, is not necessarily equal to $\lfloor (\ds(\calC)-1)/2 \rfloor$ or $\lfloor (\ds(\calC)-2)/4 \rfloor$, but lies somewhere in between. For instance, in the case of a constant-dimension code $\Omega$, we have
\begin{align}
    \di(\mathcal{V},\mathcal{V}') &= \frac{1}{2}\ds(\mathcal{V},\mathcal{V}'),\quad \forall \mathcal{V},\mathcal{V}' \in \Omega,  \nonumber \\
    \di(\Omega) &= \frac{1}{2}\ds(\Omega). \nonumber
\end{align}
Thus, Theorem~\ref{thm:noncoherent-complete-capability} implies that $t_S = \lfloor (\ds(\calC)-2)/4 \rfloor$ exactly. In other words, in this special case, the approach in \cite{Kotter.Kschischang2008} coincides with that of this paper, and Theorem~\ref{thm:noncoherent-complete-capability} provides a converse that was missing in \cite{Kotter.Kschischang2008}. On the other hand, suppose $\Omega$ is a subspace code consisting of just two codewords, one of which is a subspace of the other. Then we have precisely $t_S = \lfloor (\ds(\calC)-1)/2 \rfloor$, since $t_S+1$ packet-injections are needed to get past halfway between the codewords.

Since no single quantity is known that perfectly describes the packet error correction capability of the minimum-subspace-distance decoder (\ref{eq:subspace-decoding-rule}) for general subspace codes, we cannot provide a definitive comparison between decoders (\ref{eq:subspace-decoding-rule}) and (\ref{eq:noncoherent-decoding-rule}). However, we can still compute bounds for codes that fit into Example~\ref{ex:decoding-comparison}.

\begin{example}
  Let us continue with Example~\ref{ex:decoding-comparison}. Now, we adjoin another codeword $\mathcal{V}_3 = \linspan{X_3}$
such that $\ds(\mathcal{V}_1,\mathcal{V}_3) = d$ and where
$\gamma' \triangleq \dim \mathcal{V}_3 - \dim \mathcal{V}_1$ satisfies $d/3 < \gamma' < d/2$. Also we assume that $\ds(\mathcal{V}_2,\mathcal{V}_3)$ is sufficiently large so as not to interfere with the problem (e.g., $\ds(\mathcal{V}_2,\mathcal{V}_3) > 3d/2$).

Let $t_{\scriptscriptstyle \rm S}$ and $t_{\scriptscriptstyle \rm M}$ denote the packet error correction capabilities of the decoders (\ref{eq:subspace-decoding-rule}) and (\ref{eq:noncoherent-decoding-rule}), respectively. From the argument of Example~\ref{ex:decoding-comparison}, we get $t_{\scriptscriptstyle \rm M} \geq \max\{\epsilon,\epsilon'\}$, while $t_{\scriptscriptstyle \rm S} < \min\{\epsilon,\epsilon'\}$, where $\epsilon' = (d-\gamma')/2$. By choosing $\gamma \approx d/3$ and $\gamma' \approx d/2$, we get $\epsilon \approx d/3$ and $\epsilon' \approx d/4$. Thus, $t_{\scriptscriptstyle \rm M} \geq (4/3) t_{\scriptscriptstyle \rm S}$, i.e., we obtain a 1/3 increase in error correction capability by using the decoder (\ref{eq:noncoherent-decoding-rule}).
\end{example}

\section{Conclusion}
\label{sec:conclusion}

We have addressed the problem of error correction in network coding under a worst-case adversarial model. We show that certain metrics naturally arise as the fundamental parameter describing the error correction capability of a code; namely, the rank metric for coherent network coding, and the injection metric for noncoherent network coding. For coherent network coding, the framework based on the rank metric essentially subsumes previous analyses and constructions, with the advantage of providing a clear separation between the problems of designing a feasible network code and an error-correcting outer code. For noncoherent network coding, the injection metric provides a measure of code performance that is more precise, when a non-constant-dimension code is used, than the so-called subspace metric.
The design of general subspace codes for the injection metric, as well as the derivation of bounds, is left as an open problem for future research.

\appendices

\section{Detection Capability}
\label{sec:detection-capability}

When dealing with communication over an adversarial channel, there is little justification to consider the possibility of error detection. In principle, a code should be designed to be unambiguous (in which case error detection is not needed); otherwise, if there is any possibility for ambiguity at the receiver, then the adversary will certainly exploit this possibility, leading to a high probability of decoding failure (detected error). Still, if a system is such that (a) sequential transmissions are made over the same channel, (b) there exists a feedback link from the receiver to the transmitter, and (c) the adversary is not able to fully exploit the channel at all times, then it might be worth using a code with a lower correction capability (but higher rate) that has some ability to detect errors.

Following classical coding theory, we consider error detection in the presence of a bounded error-correcting decoder. More precisely, define a \emph{bounded-discrepancy decoder with correction radius $t$}, or simply a \emph{$t$-discrepancy-correcting decoder}, by
\begin{equation}\nonumber
  \hat{x}(y) = \begin{cases}
    x & \text{if $\Delta(x,y) \leq t$ and $\Delta(x',y) > t$ for all $x' \neq x$, $x' \in \calC$} \\
    f & \text{otherwise}.
  \end{cases}
\end{equation}
Of course, when using a $t$-discrepancy-correcting decoder, we implicitly assume that the code is $t$-discrepancy-correcting. The \emph{discrepancy detection capability} of a code (under a $t$-discrepancy-correcting decoder) is the maximum value of discrepancy for which the decoder is guaranteed not to make an undetected error, i.e., it must return either the correct codeword or the failure symbol $f$.

For $t \in \mathbb{N}$, let the function $\sigma^t\colon \calX \times \calX \to \mathbb{N}$ be given by
\begin{equation}\label{eq:function-exact-detection}
 \sigma^t(x,x') \triangleq \min_{y\in \calY\colon \Delta(x',y)}\, \Delta(x,y) - 1.
\end{equation}

\begin{proposition}
 The discrepancy-detection capability of a code $\calC$ is given exactly by $\sigma^t(\calC)$. That is, under a $t$-discrepancy-correction decoder, any discrepancy of magnitude $s$ can be detected if and only if $s \leq \sigma^t(\calC)$.
\end{proposition}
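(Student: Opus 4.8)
The plan is to mirror the proof of the correction-capability proposition: work directly from the definition of the $t$-discrepancy-correcting decoder, and exploit the standing assumption that $\calC$ is $t$-discrepancy-correcting, which means that for the channel with maximum discrepancy $t$ the fan-out sets $\{y\colon \Delta(x,y)\le t\}$, $x\in\calC$, are pairwise disjoint. The first step I would record is a clean description of an undetected error: when $x\in\calC$ is transmitted and $y$ is received, the decoder outputs a wrong codeword $x'\neq x$ \emph{if and only if} $\Delta(x',y)\le t$ for some $x'\in\calC\setminus\{x\}$. Indeed, by disjointness such an $x'$ is automatically the unique codeword within discrepancy $t$ of $y$ (and $\Delta(x,y)>t$), so the ``unique codeword'' clause in the definition of the $t$-discrepancy-correcting decoder is met; conversely, outputting any $x'\neq x$ forces $\Delta(x',y)\le t$ by definition of the decoder. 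I would also recall the convention $\sigma^t(\calC)=\min_{x\neq x'\in\calC}\sigma^t(x,x')$, and note that, again by disjointness, $\sigma^t(x,x')\ge t$ for all distinct $x,x'$.

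For the ``if'' direction (a discrepancy of magnitude $s\le\sigma^t(\calC)$ is always detected), I would argue by contraposition. Suppose that for some $x\in\calC$ and some $y$ with $\Delta(x,y)\le s$ the decoder returns some $x'\neq x$. By the first step, $\Delta(x',y)\le t$, so $y$ is one of the outputs competing in the minimization defining $\sigma^t(x,x')$ in (\ref{eq:function-exact-detection}); hence $\Delta(x,y)\ge\sigma^t(x,x')+1\ge\sigma^t(\calC)+1\ge s+1$, contradicting $\Delta(x,y)\le s$. Thus no undetected error can occur when the discrepancy is at most $s$, so the detection capability is at least $\sigma^t(\calC)$.

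For the ``only if'' direction, suppose $s>\sigma^t(\calC)$, i.e.\ $\sigma^t(\calC)\le s-1$. Choose distinct $x,x'\in\calC$ attaining $\sigma^t(x,x')=\sigma^t(\calC)$; by the definition of $\sigma^t(x,x')$ there is some $y\in\calY$ with $\Delta(x',y)\le t$ and $\Delta(x,y)=\sigma^t(x,x')+1\le s$. Transmitting $x$ and receiving this $y$, the first step shows the decoder outputs $x'$, which is an undetected error occurring at discrepancy $\Delta(x,y)\le s$. Combining the two directions yields that the discrepancy-detection capability equals $\sigma^t(\calC)$ exactly.

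The step I expect to require the most care — more a delicate point than a real difficulty — is verifying in the converse that the witness $y$ actually provokes a \emph{wrong-codeword} output rather than a decoding failure $f$. This is precisely where the hypothesis that $\calC$ is $t$-discrepancy-correcting is indispensable: it guarantees that $x'$ is the \emph{unique} codeword within discrepancy $t$ of $y$, so the $t$-discrepancy-correcting decoder is forced to return $x'$; the same hypothesis is what gives $\sigma^t(x,x')\ge t$, keeping the construction consistent with the requirement $\Delta(x,y)>t$.
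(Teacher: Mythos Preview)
Your proof is correct and follows essentially the same approach as the paper: contraposition for the ``if'' direction using the definition of $\sigma^t(x,x')$ to force $\Delta(x,y)\ge\sigma^t(\calC)+1$, and for the converse picking a minimizing pair $x,x'$ and a witness $y$ then invoking the $t$-discrepancy-correcting hypothesis to ensure the decoder actually outputs $x'$. Your write-up is in fact somewhat more careful than the paper's in making explicit the ``undetected error iff $\Delta(x',y)\le t$'' characterization and in noting that $\sigma^t(x,x')\ge t$ keeps the converse construction consistent.
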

\begin{proof}
 Let $t < s \leq \sigma^t(\calC)$. Suppose that $x \in \calX$ is transmitted and $y \in \calY$ is received, where $t < \Delta(x,y) \leq s$. We will show that $\Delta(x',y) > t$, for all $x' \in \calC$. Suppose, by way of contradiction, that $\Delta(x',y) \leq t$, for some $x' \in \calC$, $x' \neq x$. Then $\sigma^t(\calC) \leq \Delta(x,y) - 1 \leq s-1 < s \leq \sigma^t(\calC)$, which is a contradiction.

 Conversely, assume that $\sigma^t(\calC) < s$, i.e., $\sigma^t(\calC) \leq s-1$. We will show that an undetected error may occur. Since $\sigma^t(\calC) \leq s-1$, there exist $x,x' \in \calC$ such that $\sigma^t(x,x') \leq s-1$. This implies that there exists some $y \in \calY$ such that $\Delta(x',y) \leq t$ and $\Delta(x,y) - 1 \leq s-1$. By assumption, $\calC$ is $t$-discrepancy-correcting, so $\hat{x}(y) = x'$. Thus, if $x$ is transmitted and $y$ is received, an undetected error will occur, even though $\Delta(x,y) \leq s$.
\end{proof}

The result above has also been obtained in \cite{Yang++2008:WeightProperties}, although with a different notation (in particular, treating $\sigma^0(x,x')+1$ as a ``distance'' function). Below, we characterize the detection capability of a code in terms of the $\Delta$-distance.

\begin{proposition}
 For any code $\calC$, we have $\sigma^t(\calC) \geq \delta(\calC) -t -1$.
\end{proposition}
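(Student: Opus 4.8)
The plan is to establish the stronger pointwise inequality $\sigma^t(x,x') \geq \delta(x,x') - t - 1$ for every pair of distinct codewords $x,x' \in \calC$ and then minimize over such pairs. Note that $\delta(\cdot,\cdot)$ is symmetric whereas $\sigma^t(\cdot,\cdot)$ need not be, so I would phrase the pointwise claim for every \emph{ordered} pair of distinct codewords; this suffices because $\sigma^t(\calC)$ is by definition the minimum of $\sigma^t(x,x')$ over exactly such ordered pairs.

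To prove the pointwise claim, fix distinct $x,x' \in \calC$ and let $y^\star \in \calY$ be a minimizer in the definition \eqref{eq:function-exact-detection} of $\sigma^t(x,x')$; thus $y^\star$ satisfies the constraint $\Delta(x',y^\star) \leq t$ and $\sigma^t(x,x') = \Delta(x,y^\star) - 1$. Since the minimization defining $\delta(x,x')$ in \eqref{eq:Delta-distance-definition} ranges over \emph{all} $y \in \calY$, the point $y^\star$ is admissible there, and hence
\[
  \delta(x,x') \;\leq\; \Delta(x,y^\star) + \Delta(x',y^\star) \;\leq\; \bigl(\sigma^t(x,x') + 1\bigr) + t,
\]
which rearranges to $\sigma^t(x,x') \geq \delta(x,x') - t - 1$.

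Finally, taking the minimum over all ordered pairs of distinct codewords: the left-hand side becomes $\sigma^t(\calC)$ by definition, while $\delta(x,x') - t - 1 \geq \delta(\calC) - t - 1$ for every such pair, giving $\sigma^t(\calC) \geq \delta(\calC) - t - 1$. There is no genuine obstacle here; the only points to be careful about are the direction of the inequality and the observation that the optimal $y^\star$ for $\sigma^t$ automatically satisfies the (unconstrained) minimization defining $\delta$, after which the estimate is an immediate triangle-inequality-style bound.
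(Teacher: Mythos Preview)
Your proof is correct and follows essentially the same approach as the paper: pick a minimizer $y^\star$ for $\sigma^t(x,x')$, plug it into the definition of $\delta(x,x')$ as a feasible point, and rearrange. You are in fact slightly more careful than the paper, which contains a typo in the direction of the final inequality (it writes $\sigma^t(x,x') \leq \delta(x,x') - t - 1$ where $\geq$ is meant) and does not spell out the passage from the pointwise bound to the code-level bound.
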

\begin{proof}
 For any $x,x' \in \calX$, let $y \in \calY$ be a solution to the minimization in (\ref{eq:function-exact-detection}), i.e., $y$ is such that $\Delta(x',y) \leq t$ and $\Delta(x,y) = 1 + \sigma^t(x,x')$. Then $\delta(x,x') \leq \Delta(x,y) + \Delta(x',y) \leq 1+\sigma^t(x,x') + t$, which implies that $\sigma^t(x,x') \leq \delta(x,x') - t - 1$.
\end{proof}

\begin{theorem}
 Suppose that $\Delta(\cdot,\cdot)$ is \magic. For every code $\calC \subseteq \calX$, we have $\sigma_t(\calC) = \delta(\calC) - t - 1$.
\end{theorem}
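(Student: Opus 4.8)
The plan is to combine the inequality from the preceding proposition with the \magic{} property, in exactly the same way that Theorem~\ref{thm:magic-property-converse} was obtained from Proposition~\ref{prop:correction-guarantee}. The preceding proposition already gives $\sigma^t(\calC) \geq \delta(\calC) - t - 1$, so it suffices to prove the reverse inequality $\sigma^t(\calC) \leq \delta(\calC) - t - 1$.

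First I would fix distinct codewords $x, x' \in \calC$ that attain the minimum, i.e., $\delta(x,x') = \delta(\calC)$, and check that the index $i = t$ lies in the admissible range $0 \leq i \leq \delta(x,x')$ of Definition~\ref{def:magic-property} for this pair. This is forced by the standing assumption that $\calC$ is $t$-discrepancy-correcting (implicit whenever one speaks of a $t$-discrepancy-correcting decoder): we then have $t \leq \tau(\calC) \leq \tau(x,x')$, and evaluating the minimum in~(\ref{eq:function-exact-capability}) at a minimizer $y_0$ of~(\ref{eq:Delta-distance-definition}) gives $\tau(x,x') \leq \max\{\Delta(x,y_0),\Delta(x',y_0)\} - 1 \leq \Delta(x,y_0) + \Delta(x',y_0) - 1 = \delta(x,x') - 1 < \delta(x,x')$.

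Next I would invoke the \magic{} property with $i = t$: there exists some $y \in \calY$ with $\Delta(x',y) = t$ and $\Delta(x,y) = \delta(x,x') - t$. Since $\Delta(x',y) \leq t$, this $y$ is feasible in the minimization~(\ref{eq:function-exact-detection}) defining $\sigma^t(x,x')$, so $\sigma^t(x,x') \leq \Delta(x,y) - 1 = \delta(x,x') - t - 1 = \delta(\calC) - t - 1$. Because $\sigma^t(\calC) \leq \sigma^t(x,x')$, this yields $\sigma^t(\calC) \leq \delta(\calC) - t - 1$, which together with the preceding proposition gives $\sigma^t(\calC) = \delta(\calC) - t - 1$.

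There is essentially no obstacle here: once the \magic{} property is in hand, everything is a direct substitution into the definitions, and the only step requiring a moment's attention is the verification $t \leq \delta(\calC)$, which is immediate from $\calC$ being $t$-discrepancy-correcting. Conceptually, the role of the \magic{} property is to supply the \emph{asymmetric} split $(\delta(x,x') - t,\, t)$ of the discrepancy budget between $x$ and $x'$ that is the worst case for error detection under a $t$-discrepancy-correcting decoder, just as it supplied the balanced split in Theorem~\ref{thm:magic-property-converse}. (Note the theorem statement writes $\sigma_t$ for what is denoted $\sigma^t$ elsewhere; the proof uses the latter notation throughout.)
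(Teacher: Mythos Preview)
Your proposal is correct and follows essentially the same route as the paper: reduce to the upper bound $\sigma^t(\calC) \leq \delta(\calC) - t - 1$, then apply the \magic{} property with $i = t$ to produce a $y$ with $\Delta(x',y) = t$ and $\Delta(x,y) = \delta(x,x') - t$, which is feasible in~(\ref{eq:function-exact-detection}). The only difference is that you explicitly verify the range condition $t \leq \delta(x,x')$ needed to invoke Definition~\ref{def:magic-property}, whereas the paper applies the \magic{} property without comment; your extra care here is appropriate.
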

\begin{proof}
 We just need to show that $\sigma^t(\calC) \leq \delta(\calC) - t - 1$. Take any $x,x' \in \calX$. Since $\Delta(\cdot,\cdot)$ is \magic, there exists some $y \in \calY$ such that $\Delta(x',y) = t$ and $\Delta(x,y) = \delta(x,x') - t$. Thus, $\sigma^t(x,x') \leq \Delta(x,y)-1 = \delta(x,x') - t - 1$.
\end{proof}

\section{Proof of Lemma~\ref{lem:min-rank-quantity}}
\label{sec:proof-lemma}

First, we recall the following useful result shown in \cite[Proposition 2]{Silva++2008}. Let $X,Y \in \Fq^{N \times M}$. Then
  \begin{equation}\label{eq:rank-difference-bound}
    \rank (X-Y) \geq \max\{\rank X,\,\rank Y\} - \dim(\linspan{X} \cap \linspan{Y}).
  \end{equation}

\begin{proof}[Proof of Lemma~\ref{lem:min-rank-quantity}]
Using (\ref{eq:rank-difference-bound}) and (\ref{eq:bound-rank-product}), we have
\begin{align}
\rank (AX-BY)
&\geq \max\{\rank AX,\,\rank BY\} - \dim(\linspan{AX} \cap \linspan{BY}) \nonumber \\
&\geq \max\{\rank X - \rho,\,\rank Y -\sigma\} - \dim(\linspan{X} \cap \linspan{Y}). \nonumber
\end{align}

We will now show that this lower bound is achievable. Our approach will be to construct $A$ as $A = A_1 A_2$, where $A_1 \in \Fq^{L \times (L+\rho)}$ and $A_2 \in \Fq^{(L+\rho) \times n}$ are both full-rank matrices. Then (\ref{eq:bound-rank-product}) guarantees that $\rank A \geq n-\rho$. The matrix $B$ will be constructed similarly: $B = B_1 B_2$, where $B_1 \in \Fq^{L \times (L+\sigma)}$ and $B_2 \in \Fq^{(L+\sigma) \times N}$ are both full-rank.

Let $k = \rank X$, $s = \rank Y$, and $w = \dim (\linspan{X} \cap \linspan{Y})$. Let $W \in \Fq^{w \times m}$ be such that $\linspan{W} = \linspan{X} \cap \linspan{Y}$, let $\tilde{X} \in \Fq^{(k-w) \times m}$ be such that $\linspan{W} + \slinspan{\tilde{X}} = \linspan{X}$ and let $\tilde{Y} \in \Fq^{(s-w) \times m}$ be such that $\linspan{W} + \slinspan{\tilde{Y}} = \linspan{Y}$. Then, let $A_2$ and $B_2$ be such that
\begin{equation}\nonumber
A_2 X = \mat{W \\ \tilde{X} \\ 0} \quad \text{ and } \quad B_2 Y = \mat{W \\ \tilde{Y} \\ 0}.
\end{equation}

Now, choose any $\bar{A} \in \Fq^{i \times (k-w)}$ and $\bar{B} \in \Fq^{j \times (s-w)}$ that have full row rank, where $i = [k-w-\rho]^+$ and $j=[s-w-\sigma]^+$. For instance, we may pick $\bar{A} = \mat{I & 0}$ and $\bar{B} = \mat{I & 0}$. Finally, let
\begin{equation}\nonumber
  A_1 = \mat{I & 0 & 0 & 0 \\ 0 & \bar{A} & 0 & 0 \\ 0 & 0 & I & 0} \quad \text{ and } \quad B_1 = \mat{I & 0 & 0 & 0 \\ 0 & \bar{B} & 0 & 0 \\ 0 & 0 & I & 0}
\end{equation}
where, in both cases, the upper identity matrix is $w \times w$.

We have
\begin{align}
\rank (AX - BY)
&= \rank (A_1 A_2 X - B_1 B_2 Y) \nonumber \\
&= \rank (\mat{W \\ \bar{A} \tilde{X} \\ 0} - \mat{W \\ \bar{B} \tilde{Y} \\ 0}) \nonumber \\
&= \max\{i,\,j\} \nonumber \\
&= \max\{k-w-\rho,\,s-w-\sigma,\,0\} \nonumber \\
&= \left[\max\{k-\rho,\,s-\sigma\}-w\right]^+. \nonumber
\end{align}
\end{proof}

\section*{Acknowledgement}
  The authors would like to thank the anonymous reviewers for their helpful comments.

\clearpage

\begin{IEEEbiographynophoto}{Danilo Silva}
(S'06) received the B.Sc. degree from the Federal University of Pernambuco, Recife, Brazil, in 2002, the M.Sc. degree from the Pontifical Catholic University of Rio de Janeiro (PUC-Rio), Rio de Janeiro, Brazil, in 2005, and the Ph.D. degree from the University of Toronto, Toronto, Canada, in 2009, all in electrical engineering.

He is currently a Postdoctoral Fellow at the University of Toronto. His research interests include channel coding, information theory, and network coding.
\end{IEEEbiographynophoto}

\begin{IEEEbiographynophoto}{Frank R. Kschischang}
(S'83--M'91--SM'00--F'06) received the B.A.Sc. degree (with honors) from the University of British Columbia, Vancouver, BC, Canada, in 1985 and the M.A.Sc. and Ph.D. degrees from the University of Toronto, Toronto, ON, Canada, in 1988 and 1991, respectively, all in electrical engineering.

He is a Professor of Electrical and Computer Engineering and Canada Research Chair in Communication Algorithms at the University of Toronto, where he has been a faculty member since 1991. During 1997--1998, he was a Visiting Scientist at the Massachusetts Institute of Technology, Cambridge, and in 2005 he was a Visiting Professor at the ETH, Z{\"u}rich, Switzerland. His research interests are focused on the area of channel coding techniques.

Prof. Kschischang was the recipient of the Ontario Premier's Research Excellence Award. From 1997 to 2000, he served as an Associate Editor for Coding Theory for the \textsc{IEEE Transactions on Information Theory}. He also served as Technical Program Co-Chair for the 2004 IEEE International Symposium on Information Theory (ISIT), Chicago, IL, and as General Co-Chair for ISIT 2008, Toronto.
\end{IEEEbiographynophoto}

\end{document}